\newcommand{\makeprob}[3]{
\medskip\noindent
\begin{tabularx}{\textwidth}{@{}l@{ }>{\raggedright\arraybackslash}X}
\multicolumn{2}{@{}l}{#1}\\
\textsf{Input: } & {#2} \\
\textsf{Output: }& {#3} \\
\end{tabularx}
}
\newcommand{\pb}[1]{\textsc{#1}}
\newcommand{\calB}{{\cal B}}
\newcommand{\calC}{{\cal C}}
\newcommand{\calD}{{\cal D}}
\newcommand{\calE}{{\cal E}}
\newcommand{\calK}{{\cal K}}
\newcommand{\calM}{{\cal M}}
\newcommand{\calP}{{\cal P}}
\newcommand{\N}{{\mathbb{N}}}
\newcommand{\raus}[1]{}
\newcommand{\Etransversal}{\text{\sc Enum-Transversal}}
\newcommand{\Emonotonesat}{\text{\sc Enum-Monotone-Sat}}
\newcommand{\Eihssat}{\text{\sc Enum-IHS-Sat}}
\newcommand{\Exorsat}{\text{\sc Enum-XOR-Sat}}
\newcommand{\Ereach}{\text{\sc Enum-Reach}}
\newcommand{\tc}{\text{tc}}
\newcommand{\EHornsat}{\text{\sc Enum-Horn-Sat}}
\newcommand{\EKromsat}{\text{\sc Enum-Krom-Sat}}
\newcommand{\EGrayWord}{\text{\sc Enum-Gray-Word}}
\newcommand{\EGrayRank}{\text{\sc Enum-Gray-Rank}}
\newcommand{\EGrayWordord}{\mbox{\text{\sc Enum-Gray-Word}$_{ord}$}}
\newcommand{\EGrayRankord}{\mbox{\text{\sc Enum-Gray-Rank}$_{ord}$}}
\newcommand{\enum}[1]{\textsc{Enum}\smash{\cdot}#1}
\newcommand{\sol}[2]{\mathrm{Sol}_{#1}({#2})}
\newcommand{\ACzero}{\mathbf{AC}^0}
\newcommand{\NCone}{\mathbf{NC}^1}
\newcommand{\NP}{\mathbf{NP}}
\newcommand{\PSPACE}{\mathbf{PSPACE}}
\newcommand{\ptime}{\mathbf{P}}
\newcommand{\cdl}{\mathbf{CD}\!\circ\!\mathbf{lin}}
\newcommand{\IncP}{\mathbf{IncP}}
\newcommand{\DelayP}{\mathbf{DelayP}}
\newcommand{\DelACzero}{\mathbf{Del}\!\cdot\!\mathbf{AC}^0}
\newcommand{\DelK}{\mathbf{Del}\!\cdot\!\calK}
\newcommand{\DelACzeroPrecomp}[1]{\mathbf{Del}_{#1}\!\cdot\!\mathbf{AC}^0}
\newcommand{\DelKPrecomp}[1]{\mathbf{Del}_{#1}\!\cdot\!\calK}
\newcommand{\DelACzeroPrecompSpace}[1]{\mathbf{Del}^*_{#1}\!\cdot\!\mathbf{AC}^0}
\newcommand{\DelKPrecompSpace}[1]{\mathbf{Del}^*_{#1}\!\cdot\!\calK}
\newcommand{\DelACzeroPrecompPolySpace}[1]{\mathbf{Del}^{\mathrm{P}}_{#1}\!\cdot\!\mathbf{AC}^0}
\newcommand{\DelKPrecompPolySpace}[1]{\mathbf{Del}^{\mathrm{P}}_{#1}\!\cdot\!\calK}
\newcommand{\DelACzeroPrecompConstantSpace}[1]{\mathbf{Del}^{\mathrm{c}}_{#1}\!\cdot\!\mathbf{AC}^0}
\newcommand{\DelKPrecompConstantSpace}[1]{\mathbf{Del}^{\mathrm{c}}_{#1}\!\cdot\!\calK}
\newcommand{\DelACzeroSpace}{\mathbf{Del}^*\!\cdot\!\mathbf{AC}^0}
\newcommand{\DelKSpace}{\mathbf{Del}^*\!\cdot\!\calK}
\newcommand{\DelACzeroPolySpace}{\mathbf{Del}^{\mathrm{P}}\!\cdot\!\mathbf{AC}^0}
\newcommand{\DelKPolySpace}{\mathbf{Del}^{\mathrm{P}}\!\cdot\!\calK}
\newcommand{\DelACzeroConstantSpace}{\mathbf{Del}^{\mathrm{c}}\!\cdot\!\mathbf{AC}^0}
\newcommand{\DelKConstantSpace}{\mathbf{Del}^{\mathrm{c}}\!\cdot\!\calK}
\title{Enumeration Classes Defined by Circuits}
 \author{Nadia Creignou}{Aix Marseille Univ, Université de Toulon, CNRS, LIS, Marseille, France}{Nadia.Creignou@lis-lab.fr}{[orcid]}{}
  \author{Arnaud Durand}{Université Paris Cité, CNRS, IMJ-PRG, Paris, France}{durand@math.univ-paris-diderot.fr}{[orcid]}{}
 \author{Heribert Vollmer}{Leibniz Universität Hannover}{vollmer@thi.uni-hannover.de}{https://orcid.org/0000-0002-9292-1960}{}
\authorrunning{N. Creignou, A.Durand and H. Vollmer} 
\keywords{Computational complexity, enumeration problem, Boolean circuit} 
\begin{document}

\maketitle

%

\begin{abstract}
We refine the complexity landscape for enumeration problems by introducing very low classes defined by using Boolean circuits as enumerators. We locate well-known enumeration problems, e.g., from graph theory, Gray code enumeration, and propositional satisfiability in our classes. 
In this way we obtain a framework to distinguish between the complexity of different problems known to be in $\DelayP$, for which a formal way of comparison was not possible to this day.
\end{abstract}

\section{Introduction}

In computational complexity theory, most often decision problems are studied that ask for the existence of a solution to some problem instance, e.g., a satisfying assignment of a given propositional formula. In contrast, enumeration problems ask for a list of all solutions, e.g., all satisfying assignments. In many application areas these are the more ``natural'' kind of problems---let us just mention database queries, web search, diagnosis, data mining, bioinformatics, etc.

The notion of \emph{tractability} for enumeration problems requires a new approach, simply because there may be a large number of solutions, exponential in the input size. Widely studied is the class $\DelayP$ ("polynomial delay"), containing all enumeration problems where, for a given instance $x$, (i) the time to compute the first solution, (ii) the time between producing any two consecutive solutions, and (iii) the time to detect that no further solution exists, are all polynomially bounded in the length of $x$. Also the class $\IncP$ ("incremental polynomial time"), where we allow the time to produce the next solution and to signal that no further solution exists to grow by a polynomial bounded in the size of the input \emph{plus} the number of already computed solutions. These classes were introduced in 1988 in \cite{JYP88}, and since then, an immense number of membership results have been obtained. Recently, also intractable enumeration problems have received some attention. Reducibilities, a completeness notion and a hierarchy of intractable enumeration problems, analogous to the well-known polynomial hierarchy, were defined and studied in \cite{CKPSV19}.

In this paper we will look for notions of tractability for enumeration stricter than the above two. More specifically, we will introduce a refinement of the existing classes based on the computation model of Boolean circuits. The main new class in our framework is the class $\DelACzero$. An enumeration problem belongs to this class if there is a family of $\ACzero$ circuits, i.e., a family of Boolean circuits of constant depth and polynomial size with unbounded fan-in gates, that (i) given the input computes the first solution, (ii) given input and a solution computes the next solution (in any fixed order of solutions), and (iii) given input and the last solution, signals that no further solution exists.
Still using $\ACzero$ circuits 
we then consider extended classes by allowing
\begin{itemize}
    \item precomputation of different complexity (typically, polynomial time precomputation) and/or
    \item memory to be passed on from the computation of one solution to the next (from a constant to a polynomial number of bits) 
\end{itemize}
By this, we obtain a hierarchy of classes within $\DelayP$/$\IncP$ shown in Fig.~\ref{graphe1}.

The main motivation behind our work is the wish to be able to compare the complexity of different tractable enumeration problems by classifying them in a fine hierarchy within $\DelayP$, and to obtain lower bounds for enumeration tasks. From different application areas such as graph problems, Gray code enumeration and satisfiability, we identify natural problems, all belonging to $\DelayP$, some of which can be enumerated in $\DelACzero$, some cannot, but allowing precomputation or a certain number of bits of auxiliary memory they can. We would like to mention in particular the maybe algorithmically most interesting contribution of our paper, the case of enumeration for satisfiability of 2-CNF (Krom) formulas. While it is known that counting satisfying assignments for formulas from this fragment of propositional logic is $\#\ptime$-complete \cite{Valiant79enum}, we exhibit a $\DelACzeroPrecomp{\ptime}$ algorithm (i.e. $\DelACzero$ with polynomial time precomputation but no memory), for enumeration, thus placing the problem in one of the lowest class in our framework. 
This means that 
surprisingly satisfying assignments of Krom formulas can be enumerated very efficiently (only $\ACzero$ is needed to produce the next solution) after a polynomial time precomputation before producing the first solution.

Building on well-known lower bounds (in particular for the parity function \cite{FSS84,Ajtai89})
we prove (unconditional) separations among (some of) our classes and strict containment in $\DelayP$, and building on well-known completeness results we obtain conditional separations, leading to the inclusions and non-inclusions depicted in Fig.~\ref{graphe1}.

Another refinement of $\DelayP$ that has received considerable attention in the past, in particular in the database community, is the class $\cdl$ of problems that can be enumerated on RAMs with constant delay after linear time preprocessing \cite{DurandG07} (see also the surveys~\cite{DBLP:conf/stacs/Segoufin14,DBLP:conf/pods/durand}). 
It is not difficult to see (see Section~\ref{subsec:relation to known classes}) that $\cdl$ and $\DelACzero$ are incomparable classes; thus our approach provides a novel way to refine polynomial delay.

This paper is organized as follows. After some preliminaries, we introduce our new classes in Sect.~\ref{definitions}. In Sect.~\ref{examples} we present a number of upper and lower bounds for example enumeration problems from graph theory, Gray code enumeration and propositional satisfiability.  Depending whether we allow or disallow precomputation steps, we obtain further conditional or unconditional separation results between classes in Sect.~\ref{separations}. Finally we conclude with a number of open problems.

\section{Preliminaries}

Since our main computational model will be Boolean circuits, we fix the alphabet $\Sigma=\{0,1\}$, and use this alphabet to encode graphs, formulas, etc., as usual. Any reasonable encoding will do for all of our results. 

Let $R\subseteq\Sigma^*\times\Sigma^*$ be a computable predicate. We say that $R$ is polynomially balanced, if there is a polynomial $p$ such that for all pairs $(x,y)\in R$, we have $|y|\leq p(|x|)$. 
Now we define the enumeration problem associated to $R$ as follows.

\makeprob{$\enum{R}$}{$x\in \Sigma^*$}{an enumeration of elements in $\sol{R}{x}=\{y: R(x,y)\}$}


We require that $R$ is computable but do not make any complexity assumptions on $R$. In the enumeration context, it is sometimes stipulated that $R$ is polynomial-time checkable, i.e., membership of $(x,y)$ in $R$ is decidable in time polynomial in the length of the pair \cite{Str19,CapelliS19}. Generally, we do not require this, but we will come back to this point later.

We assume basic familiarity of the reader with the model of Boolean circuits, see, e.g., \cite{Vol99,CK02}. We use $\ACzero$ to denote the class languages that can be decided by uniform families of Boolean circuits of polynomial size and constant depth with gates of unbounded fan-in. The class of functions computed by such circuit families is denoted by $\mathbf{F}\ACzero$, and for simplicity often again by $\ACzero$.
The notation for the corresponding class of languages/functions defined by uniform families of circuits of polynomial size and logarithmic depth with gates of bounded fan-in is $\NCone$. 

The actual type of uniformity used is of no importance for the results of the present paper.
However, for concreteness, all circuit classes in this paper are assumed to be uniform using the ``standard'' uniformity condition, i.\,e., DLOGTIME-uniformity/U$_\textrm{E}$-uniformity \cite{BIS90}; the interested reader may also consult the textbook \cite{Vol99}.



\section{Delay Classes with Circuit Generators}
\label{definitions}

In this section we present the formal definition of our new enumeration classes. As we already said, we will restrict our definition to usual delay classes; classes with incremental delay can be defined analogously, however, we will see that our delay-classes with memory in a sense reflect incremental classes in the circuit model.

The main idea is that the generation of a \emph{next} solution will be done by a circuit from a family; in the examples and lower and upper bounds in the upcoming sections, these families are usually of low complexity like $\ACzero$ or $\NCone$. The generator will receive the original input word plus the previous solution. Parameters in the definition will be first the complexity of any precomputation before the first solution is output, and second the amount of information passed from the generation of one solution to the next.


\subsection{Delay Classes with no Memory}

For a family $\calC=(C_n)_{n\in\N}$ of Boolean circuits, circuit $C_i$ will be the circuit in the family with $i$ input gates. When the length of the circuit input is clear from the context, we will usually simply write $C_{|\cdot|}$ to refer to the circuit with appropriate number of input gates.

\begin{definition}\label{def:delAC0} 
{\rm [$\calK$-delay] }

Let $R$ be a polynomially balanced predicate. The enumeration problem $\enum{R}$ is in $\DelK$ if there exists a 
family of $\calK$-circuits $\calC=(C_n)_{n\in\N}$ such that, for all inputs $x$, there is an enumeration $y_1,...,y_k$ of $\sol{R}{x}$ and:

\begin{itemize}
\item $C_{|\cdot|}(x)=y_1\in \sol{R}{x}$, 
\item for all $i<k$: $C_{|\cdot|}(x,y_i)=y_{i+1}\in \sol{R}{x}$
\item $C_{|\cdot|}(x,y_k)=y_k$ 
\end{itemize}
  \end{definition}
  
  
Note that by the last requirement, the circuit family signals there is no further solution if the input solution is given again as output.
Moreover, we point out that, in the definition above, if $x$ is an input and $y\in \sol{R}{x}$, then $C_{|x|+|y|}$ produces a $z\in \sol{R}{x}$. However, if $y\not\in \sol{R}{x}$, nothing is specified about the output $z$. 


Next we consider classes where a precomputation before outputting the first solution is allowed. The ressource bounds of the precomputation are specified by an arbitrary complexity class.

\begin{definition}\label{def:DelAC0Precomp} 
{\rm [$\calK$-delay with $T$-precomputation] }
 
Let $R$ be a polynomially balanced predicate and $T$ be a complexity class. The enumeration problem $\enum{R}$ is in $\DelKPrecomp{T}$  if there exists an algorithm $M$ working with resource $T$ and a  family of $\calK$-circuits $\calC=(C_n)_{n\in\N}$ such that, for all input $x$ there is an enumeration $y_1,...,y_k$ of $\sol{R}{x}$ and:

\begin{itemize}
\item $M$ compute some value $x^*$, i.e., $M(x)=x^*$
\item $C_{|\cdot|}(x^*)=y_1\in \sol{R}{x}$, 
\item for all $i<k$: $C_{|\cdot|}(x^*,y_i)=y_{i+1}\in \sol{R}{x}$
\item $C_{|\cdot|}(x^*,y_k)=y_k$ 
\end{itemize}
\end{definition}

\subsection{Delay Classes with memory}

Extending the above model, we now allow each circuit to produce slightly more than the next solution. These additional information is then passed as extra input to the computation of the next solution, in other words, it can serve as an auxiliary memory.

\begin{definition}\label{def:DelAC0Space} {\rm [$\calK$-delay with auxiliary memory] }

Let $R$ be a polynomially balanced predicate. The enumeration problem $\enum{R}$ is in $\DelKSpace$  if there exist two families of $\calK$-circuits $\calC=(C_n)_{n\in\N}$, $\calD=(D_n)_{n\in\N}$ such that, for all input $x$ there is an enumeration $y_1,...,y_k$ of $\sol{R}{x}$ and:

\begin{itemize}
\item $C_{|\cdot|}(x)=y^*_1$ and $D_{|\cdot|}(y^*_1)=y_1\in \sol{R}{x}$, 
\item for all $i<k$: $C_{|\cdot|}(x,y^*_i)=y^*_{i+1}$ and $D_{|\cdot|}(y^*_{i+1})=y_{i+1}\in \sol{R}{x}$, 
\item $C_{|\cdot|}(x,y^*_k)=y^*_k$, 
\item for $1\leq i\leq k$, $y_i$ is a prefix of $y_i^*$.
\end{itemize}

When there exists a polynomial $p\in \N[x]$ such that $|y^*_i|\leq p(|x|)$, for all $i\leq k$, the class is called $\DelKPolySpace$, \emph{$\calK$-delay with polynomial auxiliary memory}. When there exists a constant $c\in \N$ such that $|y^*_i|\leq |y_i|+c$, for all $i\leq k$, the class is called $\DelKConstantSpace$, \emph{$\calK$-delay with constant auxiliary memory}.
\end{definition}

The idea is that the $y^*_i$ will contain the previous solution plus the additional memory. Hence the superscript ``c'' indicates a bounded auxiliary memory size.

By abuse of expression, we will sometimes say that a problem in some of these classes above can be enumerated with a delay in $\calK$ or with a $\calK$-delay. When there is no restriction on memory i.e. when considering the class $\DelKPrecompSpace{T}$, an incremental  enumeration mechanism can be used.  Indeed, the memory can then store all solutions produced so far which results in an increase of the expressive power. 

Also in the case of memory, we allow possibly precomputation before the first output is made:

\begin{definition}\label{def:DelAC0PrecompSpace} {\rm [$\calK$-delay with $T$-precomputation and auxiliary memory] }

Let $R$ be a polynomially balanced predicate and $T$ be a complexity class. The enumeration problem $\enum{R}$ is in $\DelKPrecompSpace{T}$  if there exists an algorithm $M$ working with resource $T$ and two families of $\calK$-circuits $\calC=(C_n)_{n\in\N}$, $\calD=(D_n)_{n\in\N}$ such that, for all input $x$ there is an enumeration $y_1,...,y_k$ of $\sol{R}{x}$ and:

\begin{itemize}
\item $M$ computes some value $x^*$, i.e., $M(x)=x^*$,
\item $C_{|\cdot|}(x^*)=y^*_1$ and $D_{|\cdot|}(y^*_1)=y_1\in \sol{R}{x}$, 
\item for all $i<k$: $C_{|\cdot|}(x^*,y^*_i)=y^*_{i+1}$ and $D_{|\cdot|}(y^*_{i+1})=y_{i+1}\in \sol{R}{x}$, 
\item $C_{|\cdot|}(x^*,y^*_k)=y^*_k$,
\item for $1\leq i\leq k$, $y_i$ is a prefix of $y_i^*$.
\end{itemize}

When there exists a polynomial $p\in \N[x]$ such that $|y^*_i|\leq p(|x|)$, for all $i\leq k$, the class is called $\DelKPrecompPolySpace{T}$, \emph{$\calK$-delay with $T$-precomputation and polynomial auxiliary memory}. When there exists a constant $c\in \N$ such that $|y^*_i|\leq |y_i|+c$, for all $i\leq k$, the class is called $\DelKPrecompConstantSpace{T}$, \emph{$\calK$-delay with $T$-precomputation and constant auxiliary memory}.
\end{definition}

\subsection{Relation to Known Enumeration Classes}\label{subsec:relation to known classes}

All classes we consider in this paper are subclasses of the well-known classes $\DelayP$ or $\IncP$, resp., even if we allow our circuit to be of arbitrary depth (but polynomial size).

\begin{theorem}
  If $\calK, T\subseteq \ptime $, then 
$\DelKPrecompPolySpace{T}\subseteq \DelayP$ and
and $\DelKPrecompSpace{T}\subseteq \IncP$.
\end{theorem}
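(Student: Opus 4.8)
The plan, for both inclusions, is a straightforward simulation: run the precomputation algorithm $M$ once, then repeatedly evaluate the two circuit families to walk through the solutions in the order fixed by Definition~\ref{def:DelAC0PrecompSpace}. Two observations drive everything. First, since $T\subseteq\ptime$, the algorithm $M$ runs in time polynomial in $|x|$, so $x^\ast=M(x)$ has length polynomial in $|x|$; this single precomputation is permitted before the first output in both $\DelayP$ and $\IncP$. Second, since $\calK\subseteq\ptime$, by uniformity the circuit $C_n$ (resp.\ $D_n$) can be constructed and evaluated on a length-$n$ input in time polynomial in $n$. The enumeration routine is then: on input $x$, compute $x^\ast=M(x)$; compute $y_1^\ast=C_{|\cdot|}(x^\ast)$ and output $y_1=D_{|\cdot|}(y_1^\ast)$; and, given the current memory $y_i^\ast$, compute $y_{i+1}^\ast=C_{|\cdot|}(x^\ast,y_i^\ast)$ and output $y_{i+1}=D_{|\cdot|}(y_{i+1}^\ast)$, halting as soon as $C_{|\cdot|}(x^\ast,y_i^\ast)=y_i^\ast$. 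That the emitted list is exactly $\sol{R}{x}$, without repetition, and that the halting condition is correctly detected (the no-repetition requirement on the enumeration rules out a premature halt), are immediate from the definition, so only the running-time bounds remain.

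For $\DelKPrecompPolySpace{T}\subseteq\DelayP$ this is now easy: the polynomial memory bound gives $|y_i^\ast|\le p(|x|)$ for all $i$, so every circuit input encountered has length $|x^\ast|+|y_i^\ast|=\mathrm{poly}(|x|)$, and each of the constantly many circuit evaluations performed per solution costs $\mathrm{poly}(|x|)$. Hence the time to the first output, the delay between consecutive outputs, and the time to detect termination are all polynomial in $|x|$. (The degenerate case $\sol{R}{x}=\emptyset$ is disposed of by a separate test before the loop is entered.)

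For $\DelKPrecompSpace{T}\subseteq\IncP$ we run the same routine, but now $|y_i^\ast|$ is no longer bounded by a polynomial in $|x|$ alone: a priori it is merely the output of an $i$-fold composition of polynomial-size circuits, and such a composition can grow faster than polynomially in $i$. What we need is the bound $|y_i^\ast|=\mathrm{poly}(|x|,i)$, with $i$ the number of solutions already produced; granting it, each step costs $\mathrm{poly}(|x^\ast|+|y_i^\ast|)=\mathrm{poly}(|x|,i)$, which is exactly what incremental polynomial time permits. To obtain the bound one appeals to the observation recorded in Section~\ref{definitions}: in the unbounded-memory setting the auxiliary memory may be taken to carry the explicit list of solutions emitted so far, and since $R$ is polynomially balanced that list has length $O(i\cdot p(|x|))$. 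I expect the main obstacle to be precisely this normalisation step --- arguing that an arbitrary witnessing circuit family, whose memory might encode information that is expensive to recover from the solutions alone, can be replaced without loss of generality by one whose memory is an encoding of the list of solutions so far, so that the memory stays polynomially bounded in $|x|$ and $i$; once the normal form is available, the simulation above runs in incremental polynomial time.
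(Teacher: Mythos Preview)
The paper states this theorem without proof, so there is no argument of the authors to compare against. Your treatment of $\DelKPrecompPolySpace{T}\subseteq\DelayP$ is correct and is the natural simulation: the polynomial bound on $|y_i^\ast|$ keeps every circuit input at length $\mathrm{poly}(|x|)$, uniformity lets you build and evaluate each circuit in polynomial time, and the $T$-precomputation is absorbed into the delay before the first output.

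For $\DelKPrecompSpace{T}\subseteq\IncP$ you correctly isolate the real difficulty: with no bound on $|y_i^\ast|$, iterating a polynomial-size circuit family can make the memory grow like $|x^\ast|^{k^i}$, so the direct simulation does not run in incremental polynomial time. But the remedy you propose---normalising so that the auxiliary memory is just the list $(y_1,\dots,y_i)$ of solutions emitted so far---does not follow from the sentence you cite in Section~\ref{definitions}. That sentence says only that unbounded memory is \emph{at least} powerful enough to store the running solution list (hence ``an incremental enumeration mechanism can be used''); it is a statement about what the class can do, not a normal form for witnesses, and it does not assert that every $\DelKPrecompSpace{T}$ enumeration is equivalent to one whose memory \emph{is} that list. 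The replacement is in fact problematic: to compute $y_{i+1}$ from $(x^\ast,y_1,\dots,y_i)$ via the given circuits one must first rebuild $y_i^\ast$, which is exactly the potentially huge object you are trying to avoid, and there is no reason the information in $y_i^\ast$ beyond its prefix $y_i$ should be recoverable from the solution list in time $\mathrm{poly}(|x|,i)$. So the gap you flag is genuine, your sketch does not close it, and the paper does not close it either; as stated, the second inclusion appears to require an additional hypothesis (for example a built-in bound $|y_i^\ast|\le\mathrm{poly}(|x|,i)$) or a further argument that neither you nor the authors supply.
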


 

     

Let us briefly clarify the relation between our classes and the class $\cdl$ of enumeration problems that have a constant delay on a RAM after linear-time precomputation. This class was introduced in \cite{DurandG07}.

The problem to enumerate for a given graph the pairs of all vertices that are connected by a path of length 2 has only a polynomial number of solutions and is trivially in $\DelACzero$. 
 Since it is essentially the same as Boolean matrix multiplication, it is not in $\cdl$, assuming the beforementioned BMM hypothesis.
 
 On the other hand, note that the enumeration problem {\sc Enum-Parity}, given as input a sequence of bits with the solution set consisting only of one solution, the parity of the input, 
 is not in $\DelACzero$, since the parity function is not in $\ACzero$ \cite{FSS84,Ajtai89}. However, since Parity can be computed in linear time,  {\sc Enum-Parity} is trivially in $\cdl$.

As we will show in detail in the full version of this paper, the computation of a constant number of time steps of a RAM can be simulated by $\ACzero$ circuits. Hence if we add linear precomputation and polynomial memory to save the configuration of the RAM, we obtain an upper bound for $\cdl$. To summarize:

\begin{theorem}
Classes $\DelACzero$ and $\cdl$ are incomparable, and $\cdl\subsetneq\DelACzeroPrecompPolySpace{lin}$.
\end{theorem}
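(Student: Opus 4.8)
The plan is to prove the two incomparabilities first and then the strict inclusion. For the direction $\cdl\not\subseteq\DelACzero$ I would take the problem {\sc Enum-Parity}, whose solution set on input $x$ is the singleton $\{\mathrm{par}(x)\}$. It lies in $\cdl$: compute $\mathrm{par}(x)$ during the linear-time preprocessing, then output this single solution with constant delay. It is not in $\DelACzero$: if it were, then by Definition~\ref{def:delAC0} the generator would satisfy $C_{|\cdot|}(x)=\mathrm{par}(x)$, i.e.\ we would obtain a DLOGTIME-uniform $\ACzero$ circuit family computing parity, contradicting~\cite{FSS84,Ajtai89}.

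For the direction $\DelACzero\not\subseteq\cdl$ I would use the problem of enumerating, for a graph given by its adjacency matrix $A$ on vertex set $[n]$, all pairs $(u,v)$ joined by a path of length two, that is, all $(u,v)$ with $\bigvee_{w}(A_{uw}\wedge A_{wv})=1$. Enumerating these pairs in lexicographic order is in $\DelACzero$: from the input and the current pair, a circuit evaluates the above disjunction for all pairs in parallel (a single unbounded $\mathrm{OR}$ over $w$) and selects the lexicographically least good pair strictly above the current one; the first solution and the termination test are analogous, and everything is constant-depth polynomial-size. On the other hand, the set of such pairs is exactly the support of the Boolean product $A\cdot A$, so a $\cdl$ enumeration would compute it in time $O(n^2)$ (linear in the $\Theta(n^2)$-bit input, plus $O(1)$ per solution for the at most $n^2$ solutions), contradicting the BMM hypothesis. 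Hence this problem is in $\DelACzero\setminus\cdl$, which together with the previous paragraph gives incomparability; moreover, since it also lies in $\DelACzero\subseteq\DelACzeroPrecompPolySpace{lin}$, it witnesses the strictness of the inclusion $\cdl\subseteq\DelACzeroPrecompPolySpace{lin}$ once that inclusion is established.

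To prove $\cdl\subseteq\DelACzeroPrecompPolySpace{lin}$ I would simulate a constant-delay RAM $\calR$ with word size $\Theta(\log n)$ and linear-time preprocessing. The $\mathrm{lin}$-precomputation $M$ runs the preprocessing of $\calR$ and returns $x^*$, an encoding of the resulting configuration (program counter, registers, and the touched memory cells; since addresses are $O(\log n)$ bits wide, the address space, hence the stored configuration, is of polynomial size). The memory circuit family $\calC$ then simulates, from the stored configuration, the $O(1)$ steps $\calR$ takes before emitting the next solution and outputs the updated configuration: a single RAM step --- an arithmetic/logical operation on $O(\log n)$-bit words (computable in $\ACzero$, e.g.\ by a polynomial-size lookup table) together with one direct or indirect access into a polynomial-size array (a polynomial-size multiplexer, hence $\ACzero$) --- is $\ACzero$-computable, and a constant number of such steps compose within $\ACzero$. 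The family $\calD$ projects the freshly emitted solution out of the configuration, where we arrange the encoding so that this solution appears as a prefix as required by Definition~\ref{def:DelAC0PrecompSpace}; halting of $\calR$ is signalled by reproducing the configuration unchanged. As the auxiliary memory stays polynomially bounded, this places the problem in $\DelACzeroPrecompPolySpace{lin}$.

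The main obstacle is this last inclusion: one must pin down a sufficiently robust RAM model for $\cdl$ and verify carefully that (i) the configurations remain polynomially bounded throughout a possibly very long enumeration --- which rests on the $O(\log n)$ word size bounding the address space --- and (ii) a single RAM step, indirect addressing included, genuinely lands in DLOGTIME-uniform $\ACzero$. The two separations are comparatively routine given the classical parity lower bound and the BMM hypothesis.
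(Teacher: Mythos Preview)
Your proposal is correct and follows essentially the same approach as the paper: {\sc Enum-Parity} for $\cdl\not\subseteq\DelACzero$, the length-two-path problem (via the BMM hypothesis) for $\DelACzero\not\subseteq\cdl$ and for strictness, and a RAM-configuration simulation in $\ACzero$ with polynomial memory and linear precomputation for the inclusion $\cdl\subseteq\DelACzeroPrecompPolySpace{lin}$. Your write-up in fact supplies more detail than the paper does, which defers the RAM simulation to the full version; your caveats about fixing the RAM model and checking that indirect addressing and the $O(\log n)$-word arithmetic stay in uniform $\ACzero$ are exactly the points that need care there.
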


\section{Examples}
\label{examples}
In this section we show that many natural problems, ranging from graph problems, enumeration of Gray codes and satisfiability problems lie in our circuit classes.

\subsection{Graph Problems}

We first consider the enumeration problem associated with the notion of reachability in a graph. 
 
  \makeprob{\Ereach}{a graph $G=(V,E)$, $s\in V$}{an enumeration of vertices reachable from $s$}

\begin{theorem}
  $\pb{\Ereach}\in \DelACzeroPolySpace$
\end{theorem}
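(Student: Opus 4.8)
The plan is to enumerate the vertices reachable from $s$ in BFS order, using the polynomial auxiliary memory $y^*_i$ to store (an encoding of) the current BFS frontier together with the set of already-visited vertices, so that each $\ACzero$ circuit only has to perform one ``parallel BFS step'' plus bookkeeping. Concretely, I would let the memory word $y^*_i$ encode a triple consisting of: the vertex $y_i$ just output (as the mandatory prefix), the characteristic vector $S_i \subseteq V$ of all vertices output so far, and a pointer/characteristic vector describing which vertices are ``pending'' (discovered but not yet output). Since $|V|$ is polynomial in $|G|$, all of these fit in polynomially many bits, so we land in $\DelACzeroPolySpace$ as required.

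First I would have $C_{|\cdot|}(x)$ compute $y^*_1$: output $s$ itself as $y_1$, set $S_1 = \{s\}$, and compute the pending set as the out-neighbourhood $N(s) \setminus \{s\}$ — reading adjacency is just projecting bits of the encoding of $E$, which is trivially in $\ACzero$. The decoder $D_{|\cdot|}$ simply strips off the prefix $y_i$ from $y^*_i$; it is a projection. The step circuit $C_{|\cdot|}(x, y^*_i)$ works as follows: from the pending set, select the next vertex $v$ to output (say, the one of least index — an $\ACzero$ operation since comparing and finding the minimum-index set bit is $\ACzero$); set $y_{i+1} = v$; update $S_{i+1} = S_i \cup \{v\}$; and update the pending set by removing $v$ and adding all out-neighbours of $v$ that are not in $S_{i+1}$ and not already pending. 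Computing $N(v)$ given $v$ and the edge relation is again a projection-like $\ACzero$ operation (select the row of the adjacency matrix indexed by $v$), and the set operations $\cup$, $\setminus$ are bitwise $\ACzero$. Finally, when the pending set is empty, $C_{|\cdot|}$ outputs $y^*_i$ unchanged, signalling termination.

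The correctness argument is the standard invariant for BFS-style search: one shows by induction that after step $i$, $S_i$ is exactly the set of the first $i$ vertices reachable from $s$ in the order induced by ``process pending vertices by least index,'' that the pending set is exactly $(\bigcup_{u \in S_i} N(u)) \setminus S_i$, and that the pending set is empty iff $S_i$ equals the full reachable set. This gives that the enumeration $y_1, \dots, y_k$ has no repetitions and exhausts $\sol{\Ereach}{x}$. I should also double-check the edge cases (e.g. $s$ has no out-neighbours, so $k = 1$ and $C_{|\cdot|}(x) = y^*_1$ with empty pending set, and the very next call already signals termination).

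The only mildly delicate point — and the one I'd expect to be the main obstacle — is making sure every primitive used by the step circuit really is $\ACzero$ and not something sneakier: in particular (i) indexing into the adjacency matrix by a vertex name $v$ that is itself part of the circuit input rather than hard-wired (this is fine: ``$\mathrm{BIT}$-like'' indexed access is $\ACzero$, and in a uniform $\ACzero$ family the wiring can decode $v$ and route the right bits), and (ii) extracting the least-index element of the pending set, which is a standard $\ACzero$ computation (a bit is selected iff it is set and all lower bits are unset — an unbounded AND). Once these are confirmed $\ACzero$, the whole step circuit is a constant-depth composition of $\ACzero$ pieces, hence $\ACzero$, and the theorem follows. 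No precomputation and no restriction to constant memory is claimed, so there is nothing further to optimise here.
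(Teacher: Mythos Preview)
Your proposal is correct and follows essentially the same approach as the paper: store the set of already-discovered vertices in the polynomial auxiliary memory and perform one $\ACzero$ ``frontier expansion'' per step. The only cosmetic difference is that the paper phrases the expansion step as a single Boolean matrix--vector multiplication (computing all one-hop neighbours of the current visited set at once), whereas you process one pending vertex at a time; both realisations are easily in $\ACzero$ and yield the same enumeration guarantee.
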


\begin{proof} 
At each step multiplication of Boolean matrices gives the set of vertices which are reachable from $s$ with one more step. This can be done in $\ACzero$.
The polynomial memory is  used   to remember all vertices that have been encountered to far.
\end{proof}

%
%
%

 Let us now turn to the enumeration of all transversals (not only the minimal ones).

  \makeprob{\Etransversal}{A hypergraph $H=(V,\calE)$}{an enumeration of all transversals of $H$}
  \smallskip

\begin{theorem}\label{lemma:Etransversal}
  $\Etransversal\in \DelACzero$.
\end{theorem}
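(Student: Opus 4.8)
Recall a transversal (hitting set) of a hypergraph $H=(V,\calE)$ is a set $S\subseteq V$ meeting every hyperedge $e\in\calE$. I encode subsets of $V=\{v_1,\dots,v_n\}$ as bit strings of length $n$, so $\sol{R}{H}$ consists of all $y\in\{0,1\}^n$ such that the set $S_y=\{v_j : y_j=1\}$ intersects every $e\in\calE$. The plan is to pick a fixed, easily computable enumeration order on transversals and show that an $\ACzero$ circuit can, given $H$ and the current transversal $y$, produce the next one in that order.

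**The enumeration order.** The natural choice is lexicographic (or, equivalently, numerical) order on the bit strings $y\in\{0,1\}^n$, restricted to those $y$ that are transversals. The full vertex set $V$ (the all-ones string) is always a transversal, so the enumeration ends there; the lexicographically smallest transversal is the start. The key subroutine is: given an arbitrary $y\in\{0,1\}^n$, compute the lexicographically least transversal $y'$ with $y'\ge y$ (numerically). Then ``next transversal after $y$'' is just this subroutine applied to $y+1$ (the numerical successor of $y$), and the termination condition $C(H,y_k)=y_k$ holds because when $y_k$ is the all-ones string there is no strictly larger string, so we detect this and output $y_k$ unchanged. First I would note that checking ``$S_z$ is a transversal'' is an $\ACzero$ predicate: it is $\bigwedge_{e\in\calE}\bigvee_{v_j\in e} z_j$, an AND of ORs over the (polynomially many) edges, directly readable from the (adjacency-style) encoding of $H$.

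**The main step.** The crux is computing, in $\ACzero$, the least transversal $\ge y$. Here I would use the standard $\ACzero$ trick for ``find the lexicographically least solution above a threshold'': for each position $i$ from $1$ to $n$, one determines bit $y'_i$ in parallel. Concretely, $y'$ agrees with $y$ on some prefix, then at the first position where they differ $y'$ has a $1$ where $y$ has a $0$, and after that $y'$ is filled with the lexicographically least completion. Because ``is there a transversal extending a given partial assignment on a prefix'' reduces to checking whether every hyperedge is already hit by the forced-$1$ vertices or still contains a free vertex (again an AND of ORs), all the relevant ``can this prefix be completed to a transversal'' tests are $\ACzero$-computable, and they can be evaluated for all prefixes simultaneously. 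Combining them with a standard parallel-prefix / first-one selection — itself $\ACzero$, since it only needs unbounded-fan-in OR to locate the first index satisfying a condition — yields each output bit $y'_i$ as an $\ACzero$ function of $H$ and $y$. Since the transversals are exactly the $y$ such that the trivial completion by $1$s works, and the all-ones string is always a transversal, the subroutine always succeeds (returns some transversal $\ge y$), except we must separately handle the end: if $y$ is already the all-ones string, output $y$; this is a single AND gate.

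**The obstacle.** I expect the main subtlety to be the uniformity and exact $\ACzero$-bound of the ``least completion'' computation — making precise that the first-difference position and the forced completion after it are genuinely constant-depth, and that reading off ``vertex $v_j$ lies in edge $e$'' from the encoding is $\ACzero$ (this is where ``any reasonable encoding'' matters). None of this requires memory or precomputation — the number of solutions may be exponential, but each next-solution step is a self-contained $\ACzero$ computation from $(H,y)$ — so the problem lands in $\DelACzero$, exactly as the Boolean-matrix-multiplication example earlier in the paper lands there by a similar ``one parallel step per output'' argument.
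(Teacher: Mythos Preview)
Your approach has a genuine gap, and it is precisely where you wave your hands: ``after that $y'$ is filled with the lexicographically least completion.'' You correctly observe that the test ``does this prefix extend to a transversal'' is in $\ACzero$ (complete with $1$'s and check), and you correctly find the first-difference position in $\ACzero$. But computing the \emph{lex-least completion} of a prefix in the standard order $0<1$ is a different matter: for each remaining position $j$ you must decide whether $v_j$ can be left out, and that decision depends on which of $v_{i+1},\dots,v_{j-1}$ were already forced in. This is exactly the greedy algorithm whose output, on $2$-uniform hypergraphs, is the complement of the lexicographically first maximal independent set. LFMIS is $\ptime$-complete under $\ACzero$ reductions (via the textbook reduction from NOR-circuit value), so it is not in $\ACzero$. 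In particular, already your \emph{first} solution --- the lex-smallest transversal of $H$ --- cannot be produced by an $\ACzero$ circuit, and neither can the successor step, which needs the lex-least transversal of a restricted hypergraph.

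The paper's proof uses the same flashlight-search idea but dodges this obstacle by reversing the order: with the convention $1<0$, the enumeration starts at the trivial transversal $1\cdots1$, and the candidate successors of $y$ are exactly the strings $z^i=y_1\cdots y_{i-1}01\cdots1$ for each $i$ with $y_i=1$. In this order the ``lex-least completion'' of any prefix is simply the all-$1$'s completion, which by upward-closure of transversals is a transversal iff the prefix is extendable at all. Thus the extendability test and the completion coincide, and one $\ACzero$ layer testing all $z^i$ in parallel plus a selector for the largest successful $i$ suffices. Your argument becomes correct if you make this single change of order; everything else you wrote (extendability is an AND of ORs, parallel testing of all prefixes, $\ACzero$ selection of the right index) is fine.
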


\begin{proof}
 Let $\calE$ be a set of hyperedges over a set of $n$ vertices. Every  binary word $y=y_1\cdots y_n\in\{0,1\}^n$ can be interpreted as a subset of vertices.  We propose an algorithm that enumerates each of these words that corresponds to  a transversal of $H$ in lexicographical order with the convention $1<0$.  The algorithm is as follows:
\begin{itemize}
\item As a first step output $1\ldots 1$, the trivial solution. 
\item Let $H$  be the input and $y$ be the last output solution.
\begin{itemize} 
    \item For each   prefix $y_1\ldots y_i$ of $y$  with $y_i=1$ and $i\le n$ consider the word of length $n$, $z^i=y_1\ldots y_{i-1}01\ldots 1$.
    \item Check whether at least one of these words $z^i$  is a transversal of $H$.
        \item If yes select the one with the longest common prefix with $y$, that is the transversal  $z^i$ with the largest $i$ and output it as the next solution. 
        \item Else stop.
\end{itemize}
\end{itemize}

First we prove that the algorithm is correct. The transversal that is the successor of $y$ in our lexicographical order (where $1<0$), if it exists, has a common prefix with $y$, then a bit flipped from 1 to 0, and finally is completed by 1's only. Indeed, a successor of $y$ necessarily starts this way, and by monotonicity the first extension of such a prefix into a solution is the one completed by 1's only. As a consequence our algorithm explores all possible candidates and select the next transversal in the lexicographical order. 

Now let us prove that this is an $\ACzero$-delay enumeration algorithm that does not require  memory.
The main observation is that one can check with an $\ACzero$  circuit whether a binary word corresponds to transversal of $H$. Now, for each $i$ we can use a sub-circuit, which on input $(H$, $y$) checks whether $y_i=1$ and if yes whether $z^i$ is a transversal of $H$.  This  circuit can output $(z^i, 1)$ if both tests are positive, and $(y_i, 0)$ otherwise. All these sub-circuits can be wired  in parallel. Finally it suffices to use a selector to output $z^i$ with the largest $i$ for which $(z^i, 1)$  is output at the previous step. Such a selector can be implemented by  an $\ACzero$ circuit. 
\end{proof}

It is then easy to show (in a similar way) that  enumeration of all dominating sets of a graph can be done in $\DelACzero$.

%
  


\subsection{Gray Code}

Given $n\in \N$
, a Gray $n$-code is a ranked list of elements of $\Sigma^n$ such that between two successive words $x,y$ there exists only one bit such that $x_i\neq y_i$. Since we deal with Boolean circuits, we have to fix $\Sigma=\{0,1\}$, but Gray codes are defined for arbitrary alphabets.

The binary reflected Gray code of length $n$, denoted $G^n$, is made of $2^n$ words:
$G^n=[G^n_0,G^n_1, \ldots, G^n_{2^n-1}]$.
It is defined recursively as follows: $G^1=[0,1]$ and, for $n\geq 1$
\[G^n=[0G^{n-1}_0,0G^{n-1}_1, \ldots, 0G^{n-1}_{2^{n-1}-1}, 1G^{n-1}_{2^{n-1}-1}, \ldots, 1G^{n-1}_1, 1G^{n-1}_0].\]

As an example let us consider the list of pairs $(rank, word)$ for  $n=4$: $(0,0000)$,  $(1,0001)$, $(2,0011)$, $(3,0010)$, $(4,0110)$, $(5,0111)$, $(6,0101)$, $(7,0100)$, $(8,1100)$, $(9,1101)$, $(10,1111)$, $(11,1110)$, $(12,1010)$, $(13,1011)$, $(14,1001)$, $(15,1000)$.
%

Given $n$ and $r< 2^{n}$, let $b_{n-1}\cdots b_1b_0$ be the binary decomposition of $r$ and $G_r^n=a_{n-1}\cdots a_1a_0\in\Sigma^n$ be the $r$th word in the binary reflected code of length $n$. It is well-known that, for all $j=0,...,n-1$,
\[
b_j=\sum_{i=j}^{n-1} a_i\bmod 2 \text{ and } a_j = (b_j+b_{j+1})\bmod 2.
\]

Hence computing the rank of a word in the binary reflected code amounts to be able to compute parity. On the other side, computing the word from its rank can easily be done by a circuit. 


While it is trivial to enumerate all words of length $n$ in arbitrary or lexicographic order, this is not so clear for Gray code order. Also, given a rank or a first word, to enumerate all words of higher Gray code rank (in arbitrary order) are interesting computational problems.

\makeprob{\EGrayRank}{a binary word $r$ of length $n$ interpreted as an integer in $[0,2^n[$}{an enumeration of words of $G^n$ that are of rank at least $r$.}

\makeprob{\EGrayWord}{a word $x$ of length $n$}{an enumeration of words of $G^n$, that are of rank at least the rank of $x$.}

It turns out that for those problems where the order of solutions is not important, a very efficient enumeration is possible:

\begin{theorem}\label{th:gray code enumeration} Let $n$ be an integer
  \begin{enumerate} 
  \item Given $1^n$, enumerating all words of length $n$ even in lexicographic ordering is in $\DelACzero$
  \item $\EGrayRank\in \DelACzero$
  \item $\EGrayWord\in \DelACzero$
  \end{enumerate}
\end{theorem}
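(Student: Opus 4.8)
The plan is to treat the three parts uniformly through a common insight: in each case we enumerate $\Sigma^n$ in some order using only a low-complexity "successor" operation, where the order is chosen so that passing from one word to the next requires no global arithmetic. For part~(1), the order is lexicographic; for parts~(2) and~(3), the order will be a \emph{convenient} one (any order is allowed), which we are free to pick to avoid parity computations.

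\textbf{Part (1): lexicographic enumeration of $\Sigma^n$.} Here the first solution is $0^n$, and given a word $y$, the lexicographic successor is obtained by finding the last $0$ in $y$, flipping it to $1$, and setting all subsequent bits to $0$; the last solution $1^n$ is its own successor. First I would observe that "the position of the last $0$" can be computed by an $\ACzero$ circuit: for each position $i$ we test whether $y_i=0$ and $y_{i+1}=\cdots=y_n=1$ (a bounded conjunction over a range, hence $\ACzero$), and then a selector picks the appropriate $i$ and rewires the output, exactly as in the selector construction used in the proof of Theorem~\ref{lemma:Etransversal}. The termination condition $y=1^n$ is a single big AND gate. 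So this is a $\DelACzero$ algorithm with no memory.

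\textbf{Parts (2) and (3): enumerating high-rank words of $G^n$.} The subtlety flagged in the text is that converting between a word and its rank uses parity, which is \emph{not} in $\ACzero$; so we must avoid ever computing ranks during the enumeration. The key idea is that we do not need to output solutions in Gray-code order — any order will do. So for \EGrayRank, the plan is: given $r$ of length $n$, first compute the word $G^n_r = a_{n-1}\cdots a_0$ where $a_j=(b_j+b_{j+1})\bmod 2$; crucially $r$ is the input, so this one-time conversion of the \emph{input} is allowed even though it involves parities of prefixes of $r$ — wait, this itself needs parity, so instead I would note that $a_j$ depends only on two consecutive bits $b_j,b_{j+1}$ of $r$, hence $G^n_r$ is computable from $r$ bit-by-bit in $\ACzero$ with no parity at all. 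Having the starting word $w=G^n_r$, the set of words of rank $\geq r$ is exactly $\{G^n_s : s\geq r\}$; I would enumerate these by instead enumerating the set of \emph{binary integers} $s\in[r,2^n)$ and applying the bitwise map $s\mapsto G^n_s$. Enumerating integers from $r$ up to $2^n-1$ in increasing order is just a restricted version of lexicographic successor (part~(1)), which is in $\ACzero$, and the map $s\mapsto G^n_s$ is bitwise $\ACzero$ as just noted. So from a current solution $y = G^n_s$ we must recover $s$ to compute the next integer — but recovering $s$ from $y$ is the parity computation we are trying to avoid. The fix is to use the \emph{no-memory} definition more cleverly: we carry the solution $y$ only, and note the successor map on \emph{Gray words} themselves is $\ACzero$: given $G^n_s$, the word $G^n_{s+1}$ differs from it in exactly one bit, and that bit position is determined by local conditions on $y$ (namely, by the standard rule for the reflected Gray code, the bit to flip is position $0$ if $y$ has even weight, and otherwise the position just left of the lowest set bit). "Even weight" is parity again — so instead I would genuinely use the class $\DelACzero$ \emph{with} the order being irrelevant: enumerate in the order induced by integer-successor but store the current integer $s$ in the memory. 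This lands us in $\DelACzeroPolySpace$, not $\DelACzero$.

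\textbf{The main obstacle, and how I expect it is resolved.} The genuine difficulty is exactly the parity barrier: both the rank-to-word direction and the "which bit flips next in Gray order" question are parity computations. The resolution must be that the \emph{successor in Gray-code order} map, as a function $\Sigma^n\to\Sigma^n$, is itself in $\ACzero$ — and indeed it is, because although "flip bit $0$ iff weight is even" mentions parity, one can instead describe the flip position purely locally: writing $y=a_{n-1}\cdots a_0$, the successor of $y$ in $G^n$ (for $y\neq G^n_{2^n-1}=10^{n-1}$) flips the bit at position $0$ if $a_0=0$, \emph{but this local description is wrong in general} — so the correct route is that the problem asks for words of rank $\geq r$ \emph{in arbitrary order}, and we enumerate them as $\{G^n_s: s\in[r,2^n)\}$ by iterating over $s$; since we may output in any order, we are free to carry $s$ explicitly — but $s$ has length $n$, the same as the solution, so this is constant (indeed zero) extra memory, placing the problem in $\DelACzero$ after all. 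Concretely: the "solution" we physically manipulate is the integer $s$ written in binary; we apply $\ACzero$ integer-successor to get $s+1$; and we apply the bitwise $\ACzero$ map to \emph{output} $G^n_s$ as the actual enumerated solution, using the optional output-transformation (the $\calD$ circuit of Definition~\ref{def:DelAC0Space}) with $y_i = G^n_{s_i}$ as a prefix-free recoding of $s_i$ — but $G^n_s$ need not have $s$ as a prefix. Cleanest: since $s \mapsto G^n_s$ is an $\ACzero$ bijection on $\Sigma^n$ with $\ACzero$ inverse \emph{restricted to each bit} is false for the inverse; so ultimately the honest statement is that we carry $s$ as the state, the enumeration order on the $G^n_s$'s is whatever image order results, and we pay $O(n)$ memory — hence part~(2), and part~(3) after the $\ACzero$ preprocessing $x\mapsto \mathrm{rank}$-substitute $x\mapsto$ (the integer $s$ with $G^n_s=x$), which again needs parity. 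I expect the authors handle this by a more careful local analysis showing the Gray successor \emph{is} genuinely $\ACzero$ via a windowed characterization of the flip position, or by accepting polynomial memory; the sketch above identifies the parity obstruction as the crux and shows the arbitrary-order freedom is what must be exploited to route around it.
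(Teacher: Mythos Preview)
Your treatment of Part~(1) is fine and matches the paper's one-line dismissal.

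For Parts~(2) and~(3), however, you correctly diagnose the obstacle---the single-step Gray successor and the word-to-rank map both require parity---but you never resolve it. Your proposal ends in speculation rather than a proof, and the routes you do explore (carrying the integer $s$ as state, or hoping the Gray successor is locally computable) are either blocked by Definition~\ref{def:delAC0} (which has no extraction circuit $\calD$; that appears only in the memory variants) or by the parity lower bound you yourself cite.

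The idea you are missing is this: while a \emph{single} Gray step needs to know the parity of the current rank, the \emph{composition of two consecutive Gray steps} does not. Concretely, the operation ``flip bit~$0$, then flip the bit just above the lowest~$1$ of the result'' is an $\ACzero$ map $\Sigma^n\to\Sigma^n$; applied to $G^n_s$ with $s$ even it yields $G^n_{s+2}$, and applied to $G^n_s$ with $s$ odd it yields $G^n_{s-2}$. Crucially, for $\EGrayRank$ the parity of $r$ is simply its last bit, so we know which regime we are in. The paper's enumeration (for $r$ even) is therefore: output $y_0=G^n_r$; repeatedly apply the double step to march through $G^n_{r+2},G^n_{r+4},\ldots,G^n_{2^n-2}$; special-case the transition to $G^n_{2^n-1}$; then the same double step now walks \emph{down} the odd ranks $G^n_{2^n-3},\ldots,G^n_{r+1}$. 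The stopping word $y_1=G^n_{r+1}$ and the turnaround words are recomputed from the input $r$ at every step, so no memory is needed. The argument for $\EGrayWord$ is along the same lines.
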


\begin{appendixproof}
\textbf{(of Theorem~\ref{th:gray code enumeration})}
The proof of the first item is immediate.

Let $r$ be an input of $\enum{\pb{Gray-Rank}}$. 
 From $r$ one can easily compute $y_0=G^n_r$ and $y_1=G^n_{r+1}$ but also $z_0=10\cdots01$ and $z_1=10\cdots00$ the two last words of the binary reflected Gray code. Suppose that $r$ is even (a similar argument can be given when $r$ is odd permuting the roles of $10\cdots01$ and $10\cdots00$. The enumeration step is the following:
\begin{itemize}
    \item As a first step, output $y_0$.
    \item Let $r$ be the input and $y$ be the last output solution. 
    \begin{itemize}
        \item Compute (again) $z_0$ and $y_1$
        \item If $y=y_1$, stop
        \item If $y=z_0$, then output $z_1$
        \item Else, switch the bit of $y$ at position $0$ then find the minimal position $i$ where there is a $1$ and switch bit at position $i+1$. Output this word as the new solution.
    \end{itemize}
\end{itemize}

%

The above process does not require memory.
Starting from $y_0$, it will start enumerating binary words of rank $G^n_{r+2},G^n_{r+4}, G^n_{r+6} ...$ until it reaches $z_0=10\cdots01=G^n_{2^n-2}$. It then outputs $z_1=10\cdots00=G^n_{2^n-1}$ and, applying the same rules, enumerates successively $G^n_{2^n-3}$, $G^n_{2^n-5}$, ... until $y_1=G^n_{r+1}$. 
The proof for  $\enum{\pb{Gray-Word}}$ proceeds along the same lines but is even simpler.
\end{appendixproof}

We next turn to those versions of the above problems, where we require that solutions are given one after the other in Gray code order. For each of them, the computational complexity is provably higher than in the above cases.

\begin{theorem}\label{lem:enumerating gray code words}
Given $1^n$, enumerating all words of length $n$ in a Gray code order  is in $\DelACzeroConstantSpace\backslash\DelACzeroPrecomp{\ptime}$
\end{theorem}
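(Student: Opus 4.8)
The statement has two halves: a positive part — enumerating all length-$n$ words in Gray code order lies in $\DelACzeroConstantSpace$ — and a negative part — it does not lie in $\DelACzeroPrecomp{\ptime}$ (i.e.\ $\ACzero$-delay with polynomial-time precomputation but \emph{no} memory). I would treat these entirely separately.

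\textbf{The upper bound.} The natural route is to give the well-known constant-memory successor rule for the binary reflected Gray code and observe it is computable in $\ACzero$. Writing the current solution $y=a_{n-1}\cdots a_1 a_0$, the successor in $G^n$ is obtained by: if the number of $1$'s in $y$ is even, flip $a_0$; otherwise, let $i$ be the position of the rightmost $1$ and flip $a_{i+1}$ (with the convention that at the last word $10\cdots0$ we instead signal termination). The memory circuit $\calC$ passes along, together with $y$, a single auxiliary bit recording the parity of the Hamming weight of $y$ — this is exactly the constant auxiliary memory that takes us out of $\DelACzero$ and into $\DelACzeroConstantSpace$; note the parity bit is updated by a single XOR each step, which is in $\ACzero$. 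Given the parity bit, deciding ``flip $a_0$'' vs. ``find rightmost $1$, flip its left neighbour'' is an $\ACzero$ task: finding the rightmost $1$ and toggling the adjacent bit are standard $\ACzero$ operations (unbounded-fan-in OR over ``all lower bits are $0$''). The decoding circuit $\calD$ just projects off the parity bit, and the first solution is $0^n$ with parity bit $0$. Termination: when $y=10\cdots0$ (equivalently parity is odd and the rightmost $1$ is in the top position), output $y$ unchanged. I would also remark why we truly need the bit: without it, computing the weight parity of $y$ from scratch is exactly \textsc{Parity}, which is not in $\ACzero$ \cite{FSS84,Ajtai89} — this both motivates the construction and foreshadows the lower bound.

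\textbf{The lower bound.} Here the key is that in $\DelACzeroPrecomp{\ptime}$ the successor circuit sees only $(x^*, y_i)$ where $x^*$ depends on the input $1^n$ alone (which carries no information beyond $n$) and $y_i$ is the previous solution, with \emph{no} memory bit available. So for each fixed $n$, there is a single $\ACzero$ circuit $C$ (with an advice string $x^*$ hard-wired, since $x^*$ is a function of $n$) that must, on any word $y$ that occurs as a solution, output its Gray-code successor. I would argue that such a $C$ would let us compute \textsc{Parity}. The cleanest approach: the successor map sends $y$ to either $y\oplus e_0$ or $y\oplus e_{i+1}$ according to the parity of the Hamming weight of $y$; in particular, the bit at position $0$ of the successor equals $a_0 \oplus (\text{parity of weight of }y)$. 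Since $C$ computes the successor, the function $y \mapsto (\text{bit }0\text{ of }C(x^*,y)) \oplus a_0$ is an $\ACzero$ function (composition/projection of $\ACzero$ with fixed advice), and it equals the parity of the Hamming weight of $y$ for every $y$ in the enumeration — but the enumeration is \emph{all} of $\{0,1\}^n$, so this $\ACzero$ circuit computes \textsc{Parity} on $n$ bits, contradicting \cite{FSS84,Ajtai89}. (One should double-check the two edge words: at $y=10\cdots0$ the circuit outputs $y$ itself, which is consistent only for that single point and does not affect an asymptotic \textsc{Parity} lower bound; if desired, restrict attention to inputs $y$ with $a_0=0$, or simply note a single exceptional point cannot be fixed in $\ACzero$ either.)

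\textbf{Main obstacle.} The upper bound is routine once one commits to carrying the parity bit. The delicate point is the lower bound argument: one must be careful that ``no memory'' genuinely means the only state between consecutive calls is the previous solution word, and that the precomputation $x^*$, however powerful, is still just a fixed string per value of $n$ and hence folds into nonuniform $\ACzero$ advice — so the $\ACzero$ lower bound for \textsc{Parity} (which holds even nonuniformly) applies. The other subtlety is extracting \textsc{Parity} \emph{robustly}: I chose ``bit $0$ of the successor XORed with $a_0$'' precisely because it equals the weight-parity on the whole domain in one shot, avoiding any need to iterate the circuit or reason about which of many possible enumeration orders the algorithm uses — the theorem statement fixes ``a Gray code order'' but any such order has the property that the rank-to-word map is the reflected Gray code up to the symmetry of the hypercube, and the parity-controlled flip rule is forced; still, to be safe I would phrase the argument so it only uses that the successor of $y$ differs from $y$ in a single coordinate and that coordinate's identity is determined by weight-parity of $y$, which is a structural fact about \emph{any} Gray code, not just the reflected one.
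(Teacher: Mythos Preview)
Your proof is correct and essentially identical to the paper's. For the upper bound the paper uses the same successor rule (flip bit $0$ on even steps, else flip the bit above the rightmost $1$) with one memory bit tracking step parity---which coincides with your Hamming-weight parity bit, since in the reflected code rank parity equals weight parity; for the lower bound the paper likewise hard-wires $x^*=M(1^n)$ as advice, computes $w'=C_{|\cdot|}(x^*,w)$, and reads off parity from whether the last bits of $w$ and $w'$ differ, contradicting the $\ACzero$ lower bound for \textsc{Parity}.

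One caveat on your closing paragraph: the claim that ``the coordinate's identity is determined by weight-parity of $y$ \dots\ a structural fact about any Gray code'' is not correct. Even in the reflected code the flipped coordinate is not determined by weight parity alone (at odd weight it depends on where the rightmost $1$ sits); what does hold is the weaker statement that \emph{bit $0$ is flipped iff the weight is even}, and that is all your extraction actually uses. For an arbitrary Gray code even this weaker statement fails---short examples already exist for $n=3$ where bit $0$ is flipped at both even- and odd-weight words. So your argument, like the paper's, establishes the lower bound for the reflected Gray code order; handling \emph{every} Gray code order would need a different idea.
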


\begin{proof}
A classical method to enumerate gray code of length $n$ is the following~\cite{KreherStinson1999}. 
  \begin{itemize}
    \item Step $0$ : produce the word $0\cdots 0$ of length $n$.
    \item Step $2k+1$ : switch the bit at position $0$.
    \item Step $2k+2$: find minimal position $i$ where there is a $1$ and switch bit at position $i+1$. 
  \end{itemize}

  This method can be turned into an $\ACzero$-delay enumeration without precomputation using one bit of memory (to keep trace if the step is an even or odd one all along the computation). This proves the membership in  $\DelACzeroConstantSpace$.
  
  For the lower bound, suppose $\calC=(C_n)_{n\in\N}$ is an $\ACzero$ circuit family enumerating the Gray code of length $n$ after polynomial time precomputation produced by machine $M$. 
  We will describe how to use $\calC$ to construct an $\ACzero$-family computing the parity function, contradicting the lower bound given by \cite{Ajtai89,FSS84}. 
  
  Given is an arbitrary word $w=w_{n-1}\dots w_0$ of length $n$
  , and we want to compute its parity $\left(\sum_{i=0}^{n-1}w_i\right)\bmod2$. Let $x^*=M(1^n)$.
  Then, $w$ will appear as a solution somewhere in the enumeration defined by $\calC$. Let $w'$ be the next words after $w$. There exists $r$ such that $G^n_r=w$ and $G^n_{r+1}=w'$. By comparing $w$ and $w'$, one can decide which transformation step has been applied to $w$ to obtain $w'$ and thus if $r$ is odd or even. Note that the parity of $w$ is $1$ if and only if $r$ is odd. Hence, one can compute parity by a constant depth circuit operating as follows:
  \begin{tabbing}
  \qquad Inp\=ut $w$:\\
            \> $n:=|w|$;\\
            \> $x^* := M(1^n)$;\\
            \> $w' := C_{|\cdot|}(x^*,w)$;\\
            \> if last bits of $w$ and $w'$ differ then $v:=1$ else $v:=0$;\\
            \> output $v$.
  \end{tabbing}
  Note that the computation of $x^*$ does not depend on $w$ but only on the length of $w$; hence $x^*$ can be hardwired into the circuit family, which, since $M$ runs in polynomial time, will then be P-uniform. But we know from \cite{FSS84,Ajtai89} that parity cannot even be computed by non-uniform $\ACzero$ circuit families.
\end{proof}

We also consider the problem of enumerating all words starting not from the first one but at a given position, but now in Gray code order. Surprisingly this time the complexity will depend on how the starting point is given, by rank or by word.

\makeprob{\EGrayRankord}{A binary word $r$ of length $n$ interpreted as an integer in $[0,2^n[$}{an enumeration of words of $G^n$ in increasing number of ranks starting from rank $r$.}


\makeprob{\EGrayWordord}{A word $x$ of length $n$}{an enumeration of words of $G^n$ in Gray code order that are of rank at least  the rank of $x$.}

\begin{theorem}
\label{Lem: Gray code enumeration from rank}
\label{th: Gray code enumeration ordered bis} 
\begin{enumerate}
    \item $\EGrayRankord\in\DelACzeroConstantSpace\backslash\DelACzeroPrecomp{\ptime}$.
    \item $\EGrayWordord$ is  in the class $\DelACzeroPrecompConstantSpace{\ptime}$, but neither in $\DelACzeroPrecomp{\ptime}$ nor $\DelACzeroConstantSpace$.
\end{enumerate}
\end{theorem}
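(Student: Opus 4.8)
Let me work out how to prove each of the two items.

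**Item 1: $\EGrayRankord\in\DelACzeroConstantSpace\setminus\DelACzeroPrecomp{\ptime}$.**

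The plan is to reuse the machinery from Theorem~\ref{lem:enumerating gray code words} almost verbatim. For the membership in $\DelACzeroConstantSpace$: given the rank $r$ (a binary word of length $n$), first compute $y_0 = G^n_r$ from $r$; by the well-known identity $a_j=(b_j+b_{j+1})\bmod2$ this is just a bounded-fan-in XOR of adjacent bits, doable in $\ACzero$. Then run exactly the alternating transformation from the proof of Theorem~\ref{lem:enumerating gray code words} (Step $2k+1$: flip bit $0$; Step $2k+2$: find the minimal $1$-position $i$ and flip bit $i+1$), but we must take care to start the alternation with the correct parity: whether the first transformation applied to $y_0$ is an ``odd'' or ``even'' step depends on whether $r$ is even or odd, which we read off the last bit of $r$. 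One bit of memory suffices to track the current step parity thereafter, and we stop when we reach $G^n_{2^n-1}=10\cdots0$. So the whole thing is $\DelACzeroConstantSpace$. For the lower bound $\notin\DelACzeroPrecomp{\ptime}$: suppose $\calC$ is such a family with polynomial-time precomputation $M$. Given an arbitrary word $w=w_{n-1}\cdots w_0$ whose parity we want, compute its rank $\rho(w)$ is not directly available in $\ACzero$, so instead we run the enumeration with starting rank $r=0$, i.e.\ $x^*=M(0^n)$ where $0^n$ is the length-$n$ binary encoding of rank $0$; then $w$ appears in the enumeration, $w'=C_{|\cdot|}(x^*,w)$ is its successor, and comparing the last bits of $w$ and $w'$ tells us whether the step from $w$ to $w'$ was odd or even, hence the parity of the rank of $w$, hence (by $b_0=\sum_i a_i\bmod2$, the parity of $w$). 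Since $M$ runs in polynomial time, $x^*$ is hardwired and the circuit family is P-uniform, contradicting the parity lower bound \cite{FSS84,Ajtai89}. This is essentially identical to the argument in the proof of Theorem~\ref{lem:enumerating gray code words}.

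**Item 2: $\EGrayWordord$.**

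For membership in $\DelACzeroPrecompConstantSpace{\ptime}$: the input is a word $x$ of length $n$, and we must output the words of $G^n$ of rank at least $\rho(x)$ in Gray code order. The subtlety compared with Item~1 is that from $x$ we need to know the parity of $\rho(x)$ in order to start the alternating enumeration with the right step-parity — and $\rho(x)=\sum_{i} x_i\bmod 2$ is exactly a parity computation, not in $\ACzero$. This is precisely why precomputation is needed: in the $\ptime$ precomputation we compute $b_0=\rho(x)\bmod 2$ (indeed we can compute all of $\rho(x)$, or just its last bit) and pass this one bit as part of $x^*$. After that, the $\ACzero$ circuit does exactly what in Item~1: flip bit $0$ or find the minimal $1$ and flip the next bit, alternating, using one bit of memory for the step parity (seeded from $x^*$), stopping at $G^n_{2^n-1}$. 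Hence $\EGrayWordord\in\DelACzeroPrecompConstantSpace{\ptime}$.

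For the two lower bounds: $\notin\DelACzeroConstantSpace$ because without precomputation, on input $x$ the first circuit step $C_{|\cdot|}(x)$ (possibly with a constant number of memory bits, which cannot encode $\rho(x)\bmod2$ since that depends on $|x|$-many bits of $x$) would have to produce (a prefix-extension of) the first solution $x$ itself together with enough state to know whether the successor is obtained by an odd or an even step — but which step applies depends on $\rho(x)\bmod2$, i.e.\ the parity of $x$, and one could extract this parity from $C$ and $D$ by a constant-depth computation, contradicting \cite{FSS84,Ajtai89}. (More carefully: compare $x$ with $D_{|\cdot|}(C_{|\cdot|}(x,C_{|\cdot|}(x)))=x$ — trivially $x$ — so instead compare the \emph{second} solution with $x$: the bit position that changes between $x$ and its Gray-successor is determined by the parity of $\rho(x)$, giving an $\ACzero$ circuit for parity.) And $\notin\DelACzeroPrecomp{\ptime}$: this follows from Item~1's lower bound together with a reduction, or directly by the same argument as in Theorem~\ref{lem:enumerating gray code words}: if $\EGrayWordord$ had such an algorithm with precomputation $M$, then on input $w=0^n$ (which has rank $0$) we'd enumerate all of $G^n$ in order starting from rank $0$, so $M(0^n)$ gives a hardwired P-uniform $\ACzero$ family enumerating the full Gray code in order, contradicting Theorem~\ref{lem:enumerating gray code words}.

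**Main obstacle.** The delicate point is the $\notin\DelACzeroConstantSpace$ lower bound in Item~2: I must argue that a constant number of memory bits genuinely cannot substitute for the $\ptime$ precomputation. The key observation making this work is that the memory in the $\DelKConstantSpace$ model is \emph{initialized by the circuit $C$ on input $x$} (there is no free initial memory), so the first memory value $y^*_1$ is itself $\ACzero$-computable from $x$; extracting the step-parity bit from it therefore yields an $\ACzero$ parity circuit. I would want to state this extraction carefully, being precise about the prefix condition ``$y_i$ is a prefix of $y_i^*$'' so that the relevant parity bit is the part of $y^*_1$ beyond the prefix $x$, and then note it is computed by $C_{|\cdot|}(x)$ in $\ACzero$. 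Everything else is a routine adaptation of the already-presented Gray-code transformation rules and the standard parity-reduction template.
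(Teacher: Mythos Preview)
Your proposal is correct and follows essentially the same approach as the paper: Item~1's upper bound reuses the alternating-step method from Theorem~\ref{lem:enumerating gray code words} seeded by the last bit of $r$, and both lower bounds reduce to the full Gray-code enumeration by fixing the input to $0^n$, while Item~2's upper bound pushes the one parity computation into the $\ptime$ precomputation and the $\notin\DelACzeroConstantSpace$ argument extracts parity from the first two outputs. One small slip: in your parenthetical you wrote $D_{|\cdot|}(C_{|\cdot|}(x,C_{|\cdot|}(x)))=x$, but that expression is actually the \emph{second} solution $y_2$, not $x$; you presumably meant $D_{|\cdot|}(C_{|\cdot|}(x))=y_1=x$ for the trivial comparison, and then $D_{|\cdot|}(C_{|\cdot|}(x,C_{|\cdot|}(x)))=y_2$ for the useful one.
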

  
\begin{appendixproof}
\textbf{(of Theorem~\ref{th: Gray code enumeration ordered bis})}
\begin{enumerate}
\item
  For the upper bound,  we use the method described in Theorem~\ref{lem:enumerating gray code words}. Since the starting word is given by its rank $r$, one needs to modify a bit the approach above by first computing $G^n_r$, check what the parity of its last bit is to determine what kind of step needs to be performed first. Then we continue on as the above proof.

  
  Suppose $\enum{\pb{Gray-Rank}_{ord}}$ is in $\DelACzeroPrecomp{\ptime}$. Then by choosing $x=0^n$ we can enumerate all words of length $n$ in a Gray code order in $\DelACzeroPrecomp\ptime$, which, by Theorem~\ref{lem:enumerating gray code words}, is not possible.

\item
 For the membership in $\DelACzeroPrecompConstantSpace{\ptime}$ one just compute the rank $r$ of $x$ during the precomputation and use the preceding theorem.
 
 The first lower bound is proven exactly as in Theorem~\ref{Lem: Gray code enumeration from rank}.

The second lower bound follows by an easy modification:
Suppose $\enum{\pb{Gray-Word}_{ord}}\in\DelACzeroConstantSpace$. Given word $w$ of length $n$, we can compute its parity in $\ACzero$ as follows: Start the enumeration of $G^n$ to compute the first solution $w$ and next solution $w'$. Even with constant memory, this can be done by a circuit of constant-depth. Then decide the parity as in Theorem~\ref{lem:enumerating gray code words}.
\end{enumerate}
\end{appendixproof}

\subsection{Satisfiability Problems}
\label{sat}
 Deciding the satisfiability of a CNF-formula is  well-known to be $\NP$-complete. Nevertheless the problem becomes tractable for some restricted classes of formulas. For such classes we investigate the existence of an $\ACzero$-delay enumeration algorithm. First we consider monotone formulas.

\makeprob{\Emonotonesat}{A set of positive (resp. negative) clauses $\Gamma$ over a set of variables $V$}{an enumeration of all assignments over $V$ that satisfy $\Gamma$}

The following positive result is an immediate  corollary of  Theorem \ref{lemma:Etransversal}. 
  
\begin{theorem}\label{lemma:Emonotonesat}
 $\Emonotonesat \in \DelACzero$.
\end{theorem}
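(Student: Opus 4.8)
The plan is to obtain this as a corollary of Theorem~\ref{lemma:Etransversal} through the standard correspondence between models of a positive CNF and transversals of a hypergraph. Consider first the case where $\Gamma$ is a set of positive clauses over a variable set $V$ with $|V|=n$. Identify an assignment $\alpha\colon V\to\{0,1\}$ with the word in $\{0,1\}^n$ recording the truth values of the variables, equivalently with the subset $S_\alpha=\{x\in V : \alpha(x)=1\}$. A positive clause $C$, read as the set of variables occurring in it, is satisfied by $\alpha$ precisely when $S_\alpha\cap C\neq\emptyset$. Hence $\alpha$ is a model of $\Gamma$ iff $S_\alpha$ is a transversal of the hypergraph $H_\Gamma=(V,\Gamma)$. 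So, up to the (reasonable) encoding, $\Emonotonesat$ restricted to positive inputs is literally $\Etransversal$, and membership in $\DelACzero$ is immediate from Theorem~\ref{lemma:Etransversal}.

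For the case where $\Gamma$ is a set of negative clauses I would reduce to the positive case by complementing assignments. Let $\Gamma^{+}$ be the positive CNF obtained from $\Gamma$ by deleting all negation signs; producing (an encoding of) $\Gamma^{+}$ from (an encoding of) $\Gamma$ is an $\ACzero$-computable transformation of the input. A negative clause $(\neg x_{i_1}\vee\cdots\vee\neg x_{i_k})$ is satisfied by $\alpha$ iff some $x_{i_j}$ is false under $\alpha$, i.e.\ iff the bitwise complement $\bar\alpha$ satisfies the positive clause $(x_{i_1}\vee\cdots\vee x_{i_k})$; thus $\alpha\mapsto\bar\alpha$ is a bijection between the models of $\Gamma$ and those of $\Gamma^{+}$. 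The generator for $\Gamma$ is then the composition of three $\ACzero$ stages: on input $(\Gamma,y_i)$ compute $\Gamma^{+}$ and $\bar y_i$; feed $(\Gamma^{+},\bar y_i)$ to the $\DelACzero$ transversal generator supplied by Theorem~\ref{lemma:Etransversal} to get the next model $\bar y_{i+1}$ of $\Gamma^{+}$; output its bitwise complement $y_{i+1}$. The first-solution circuit and the ``no further solution'' behaviour (the transversal generator returning its input unchanged) carry over verbatim under complementation. Since a constant number of $\ACzero$ stages compose to a uniform constant-depth polynomial-size family, this yields $\Emonotonesat\in\DelACzero$.

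I do not expect a genuine obstacle here. The only points needing (routine) care are: checking that the encoding manipulations used ($\Gamma\mapsto\Gamma^{+}$, stripping and re-adding negations, bitwise complement, and any re-encoding between ``assignment'' and ``vertex subset'' formats expected by the transversal circuit) are indeed $\ACzero$, which holds for any reasonable encoding; tracking the termination signal correctly through the complementation step; and the degenerate inputs. On the last point, the algorithm of Theorem~\ref{lemma:Etransversal} starts from the all-ones word as trivial solution, which presupposes every hyperedge is non-empty, i.e.\ that no clause of $\Gamma$ is empty. An empty clause makes $\Gamma$ unsatisfiable, a condition detectable in $\ACzero$, so the generator can simply produce no output in that case; alternatively one assumes clauses are non-empty. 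The step I would be most careful about is making the negative-clause reduction fully precise, but it is entirely mechanical.
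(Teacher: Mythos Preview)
Your proposal is correct and follows exactly the approach the paper takes: the paper simply states that the result is an immediate corollary of Theorem~\ref{lemma:Etransversal} via the standard identification of satisfying assignments of a positive CNF with transversals of the associated hypergraph. You spell out more detail than the paper does (the negative-clause case via complementation, encoding issues, and the empty-clause degeneracy), but the underlying argument is the same.
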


If we allow polynomial precomputation, then we obtain an $\ACzero$-delay enumeration algorithm for a class of CNF-formulas, referred to as  IHS in the literature (for Implicative Hitting Sets, see \cite{CreignouKS01}), which is larger than the monotone class.  A formula in this class consists of monotone clauses (either all positive or all negative) together with implicative  clauses.

  \makeprob{\Eihssat}{A set of clauses $\Gamma$    over a set of variables $V$, with $\calC=\calM\cup\calB$,
  where $\calM$ is a set of positive clauses (resp. negative clauses)  and $\calB$ a set of binary   clauses of the form $(\neg x)$ or $(x\lor\neg x')$ (resp. of of the form $( x)$ or $(x\lor\neg x')$)}
  {an enumeration of all assignments over $V$ that satisfy $\Gamma$}

\begin{theorem}\label{th:IHS}
   $\Eihssat\in \DelACzeroPrecomp{\ptime}\setminus\DelACzeroSpace$.
\end{theorem}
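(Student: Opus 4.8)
The plan is to handle the two parts separately: the membership $\Eihssat\in\DelACzeroPrecomp{\ptime}$ and the non-membership $\Eihssat\notin\DelACzeroSpace$.

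For the membership, I would reduce the enumeration of satisfying assignments of an IHS formula to the monotone case (Theorem~\ref{lemma:Emonotonesat}) via a polynomial-time precomputation. The key structural fact about IHS formulas (say in the positive case: $\calM$ all positive, $\calB$ a set of clauses of the form $(\neg x)$ or $(x\lor\neg x')$) is that the binary clauses define a partial order / implication structure on the variables: a clause $(x'\to x)$ forces, and $(\neg x)$ forces $x$ to $0$. During precomputation I would (i) compute the closure of the forced-to-$0$ variables under the implications $(\neg x)\wedge(x'\to x)$, detecting unsatisfiability if some forced variable also appears positively in a unit-propagation conflict, and (ii) contract the remaining implication structure so that an arbitrary assignment to a set of ``representative'' free variables can be lifted, in $\ACzero$, to a full assignment respecting $\calB$, while the positive clauses $\calM$ translate into positive clauses over the representatives. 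Concretely, $x^\ast$ should encode: the list of variables forced to $0$, and for each variable a pointer to the ``root'' of its implication chain, so that the generator circuit, given a candidate assignment on roots, propagates values along chains (a bounded-depth operation once the pointer structure is precomputed) and checks the monotone part. Then one runs the $\DelACzero$ transversal-style enumeration of Theorem~\ref{lemma:Emonotonesat} on the monotone projection, using the lexicographic-successor trick from the proof of Theorem~\ref{lemma:Etransversal}, and applies the $\ACzero$ lifting map to each output. The main thing to check carefully here is that the lifting is a bijection between satisfying assignments of the monotone projection and satisfying assignments of the original IHS formula, and that it is computable by an $\ACzero$ circuit given $x^\ast$ — i.e. that the implication structure after contraction is a \emph{forest} of bounded-free out-degree chains, so that propagation is just ``copy the root value'', not an unbounded transitive-closure computation. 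This is exactly the point where the polynomial-time precomputation is essential and cannot be replaced by $\ACzero$: computing the closure/contraction is reachability-like and not in $\ACzero$.

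For the lower bound $\Eihssat\notin\DelACzeroSpace$, I would argue that \pb{Enum-Parity} (or some problem already known not to be in $\DelACzero$-with-memory) reduces to it, or more directly that an $\ACzero$ enumeration \emph{with arbitrary auxiliary memory but no precomputation} would let us solve a problem requiring unbounded depth. Recall from the discussion after Definition~\ref{def:DelAC0PrecompSpace} that with unrestricted memory and no precomputation, the first-solution circuit $C_{|\cdot|}(x)$ must already produce (a prefix extension of) $y_1$ by a single constant-depth circuit applied directly to $x$. So it suffices to exhibit a family of IHS instances $\Gamma_n$ for which computing the \emph{first} satisfying assignment in every possible enumeration order is not in $\ACzero$. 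Take $\Gamma_n$ to be a pure chain of implications $x_1\to x_2\to\dots\to x_n$ together with appropriately chosen negative units, or a formula whose unique satisfying assignment encodes the parity / the transitive closure of the input implication graph; then the first solution is determined by the instance and computing it requires resolving the implication chain, which is not in $\ACzero$ by \cite{FSS84,Ajtai89} (parity) — one encodes the bits of the input into the presence/absence of implication links so that the forced value of some variable equals the parity. Since memory does not help the very first circuit (it gets only $x$), and since the order of enumeration is existentially quantified in the definition, this rules out $\DelACzeroSpace$ membership regardless of order. The main obstacle is choosing the instance family so that the hardness is genuinely in producing \emph{some} first solution (not just a specific one) — i.e. making the satisfying set a singleton, or at least making every satisfying assignment encode the hard function — while still being a legitimate IHS formula; a unique-satisfying-assignment construction via a chain of implications plus units is the natural candidate, and then $|y_i^\ast|$ being arbitrary is irrelevant because $C_{|\cdot|}(x)$ alone must compute that unique hard assignment.

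Overall I expect the membership direction, and specifically verifying that the lifting map is $\ACzero$-computable from the precomputed data (the forest/contraction structure), to be the main technical obstacle; the lower bound should follow routinely from the parity lower bound once the singleton-solution IHS family is pinned down, mirroring the argument in the proof of Theorem~\ref{lem:enumerating gray code words}.
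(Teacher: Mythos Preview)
Your proposal has genuine gaps in both directions.

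\textbf{Upper bound.} The claim that ``the implication structure after contraction is a forest of bounded-free out-degree chains, so that propagation is just `copy the root value'\,'' is false. The binary clauses $(x\lor\neg x')$ define an arbitrary directed graph on the variables; after contracting strongly connected components you get a general DAG, not a forest, and satisfying assignments of $\calB$ are exactly its upward-closed sets. There is no small set of ``representatives'' whose arbitrary assignments lift bijectively to satisfying assignments of the IHS formula, so your reduction-to-monotone plan does not go through. The paper does something different and more direct: in the precomputation it stores, for every variable $x$, the full set $\tc(x)$ of variables forced to $0$ whenever $x$ is set to $0$ (this is a polynomial-size table). The enumeration circuit then reruns the transversal-style lexicographic search of Theorem~\ref{lemma:Etransversal}, but for each candidate prefix $y_1\ldots y_{i-1}0$ it completes the suffix not with all $1$'s but with $0$'s exactly on $\bigcup_{k\le i,\,y_k=0}\tc(x_k)$ and $1$'s elsewhere. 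Looking up these precomputed $\tc$-sets and taking their union is $\ACzero$; no forest structure is needed.

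\textbf{Lower bound.} Encoding \emph{parity} through presence/absence of implication links does not work: what a chain of implications computes is whether a forced value propagates through, i.e.\ reachability (an \textsc{and} of the link bits), not an \textsc{xor}. Moreover, your plan to force a \emph{unique} satisfying assignment fails for IHS: whenever an IHS instance with positive $\calM$ is satisfiable, the assignment ``set every non-forced variable to $1$'' is always a solution, so uniqueness would require every variable to be forced, which trivialises the instance. The paper bypasses both issues by reducing from \textsc{st-connectivity}: from $G=(V,A)$ with source $s$ and target $t$ it builds $\calP=\{(s\lor t)\}$ and $\calB=\{(\neg s)\}\cup\{(x\lor\neg y)\mid (x,y)\in A\}$, so that $\Gamma$ is satisfiable iff $t$ is \emph{not} reachable from $s$. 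Then even deciding whether a first solution exists---which any $\DelACzeroSpace$ enumerator must do with a single $\ACzero$ circuit on input $x$, since memory is useless before the first step---would put \textsc{st-connectivity} in $\ACzero$, contradicting \cite{FSS84,Ajtai89}. No singleton-solution gadget is needed.
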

 \begin{proofsketch}
      Observe that contrary to the monotone case  $1....1$ is not a trivial solution.  Indeed a negative unary clause  $(\neg x)$ in $\calB$ forces $x$ to be assigned $0$, and this truth value can be propagated to other variables by the implicative clauses of the form $(x\lor \neg x')$.
     For this reason as a precomputation step, for each variable $x$ we compute $\tc(x)$ the set of  all variables that have to be set to $0$ in any assignment satisfying $\Gamma$ in which $x$ is assigned $0$. 
     With this information we can use an 
      algorithm that enumerates all truth assignments satisfying $\Gamma$ in lexicographical order  very similar to the one used for enumerating the transversals of a graph (see the proof of  Theorem \ref{lemma:Etransversal}).   
  The detailed algorithm can found in the appendix.
      
      For the lower bound,  consider the \textsc{st-connectivity} problem: given a directed graph $G=(V,A)$ with two distinguished vertices  $s$ and $t$,  decide whether there exists a path from $s$ to $t$. From $G$, $s$ and $t$ we build an instance of $\Eihssat$ as follows. We consider a set a clauses $\calC=\calP\cup\calB$,
  where $\calP=\{(s\lor t)\}$ and $\calB=\{(\neg s)\}\cup \{(x\lor \neg y)\mid (x,y)\in A\}$. This is an $\ACzero$-reduction.
  
  Observe that there exists a path from $s$ to $t$ if and only if $\Gamma$ is unsatisfiable. Suppose that $\Eihssat\in \DelACzero$, this means in particular that outputting a first assignment satisfying $\Gamma$ or deciding there is none is in $\ACzero$. Thus the above reduction shows that \textsc{st-connectivity} is in $\ACzero$, thus contradicting the fact that  \textsc{st-connectivity} is known  not to be in $\ACzero$ (see \cite{FSS84,Ajtai89}).
 \end{proofsketch}
  
  \begin{appendixproof} 
{\bfseries  (of Theorem \ref{th:IHS})}
  Let  $\Gamma$   be a set of clauses over a set of $n$ variables $V=\{x_1,\ldots x_n\}$, with $\Gamma=\calP\cup\calB$,
  where $\calP$ is a set of positive clauses    and $\calB$ a set of binary clauses of the form $(\neg x)$ or $(x\lor \neg x')$. Any truth assignment can be seen as a binary word of length $n$. 
     
     Observe that contrary to the monotone case  $1....1$ is not a trivial solution.  Indeed a negative unary clause  $(\neg x)$ in $\calB$ forces $x$ to be assigned $0$, and this truth value can be propagated to other variables by the implicative clauses of the form $(x\lor \neg x')$.
     For this reason as a precomputation step we propose the following procedure:
     \begin{itemize}
         \item Build a directed graph $G$  whose set of vertices is $V$. For any  2-clause $(x\lor \neg x')$  in $\calB$ there is an arc $(x,x')$ in $G$. 
         \item For each 
     variable $x$ compute $\tc(x)$ the set of vertices that are reachable from $x$ in $G$.
     \end{itemize} 
     
     Intuitively $\tc(x)$ contains all variables that have to be assigned  $0$  in any satisfying assignment in which   $x$ is assigned $0$.  
     Observe that any  variable $x$ such that  $(\neg x)\in \calB$ has to be assigned 0, and so have to be  all the variables in $\tc(x)$. We replace all these variables by their value and simplify the set of clauses accordingly.  If the empty clause occurs, then $\Gamma$ is not satisfiable, otherwise it is satisfiable by the $1\ldots 1$  assignment.
     
     So in the following w.l.o.g  we suppose that $\Gamma$ is satisfiable and has no negative unary variable.

The precomputation having been performed we  propose an algorithm that enumerates all truth assignments satisfying $\Gamma$ in lexicographical order with the convention $1<0$.  The algorithm is as follows:
\begin{itemize}
\item As a first step  output $1\ldots 1$ the trivial solution.
\item Let  the set of clauses, $\Gamma$, together with the set  of  lists of vertices reachable from each  vertex $x$ in the digraph $G$,   $\{\tc(x)| x\in V\}$, be the input and $y$ be the last output solution.
\begin{itemize} 
    \item For each   prefix $y_1\ldots y_i$ of $y$  with $y_i=1$ and $i\le n$ consider the word of length $n$, $z^i=y_1\ldots y_{i-1}0w_{i+1}\ldots w_n$,  where for $j\ge i+1$, $w_j=0$ if $\displaystyle x_j\in \tc(x_i) \cup \bigcup_{\{k\vert  k< i,  y_k=0\}}\tc(x_k)$, $w_j=1$ otherwise.
    \item Check whether at least one of these words $z^i$  is an assignment satisfying $\Gamma$.
        \item If yes select the one with the longest common prefix with $y$, that is the satisfying assignment  $z^i$ with the largest $i$ and output it as the next solution. 
        \item Else stop.
\end{itemize}
\end{itemize}

Observe that   $y_1\ldots y_{i-1}0$ is the prefix of a solution if and only if  $y_1\ldots y_{i-1}0w_{i+1}\ldots w_n$  as it is defined is a solution. Moreover if $y_1\ldots y_{i-1}0w_{i+1}\ldots w_n$ is a solution, then it is the first one in the considered lexicographic order with prefix $y_1\ldots y_{i-1}0$.   The proof that the algorithm is correct is then  similar to the one of Theorem \ref{lemma:Etransversal}. The precomputation can be done in polynomial time, and when done allows an implementation of the enumeration algorithm with constant-depth circuits, thus proving that $\Eihssat\in \DelACzeroPrecomp{\ptime}$.
\smallskip

For the lower bound,  consider the \textsc{st-connectivity} problem: given a directed graph $G=(V,A)$ with two distinguished vertices  $s$ and $t$,  decide whether there exists a path from $s$ to $t$. From $G$, $s$ and $t$ we build an instance of $\Eihssat$ as follows. We consider a set a clauses $\calC=\calP\cup\calB$,
  where $\calP=\{(s\lor t)\}$ and $\calB=\{(\neg s)\}\cup \{(x\lor \neg y)\mid (x,y)\in A\}$. This is an $\ACzero$-reduction.
  
  Observe that there exists a path from $s$ to $t$ if and only if $\Gamma$ is unsatisfiable. Suppose that $\Eihssat\in \DelACzero$, this means in particular that outputting a first assignment satisfying $\Gamma$ or deciding there is none is in $\ACzero$. Thus the above reduction shows that \textsc{st-connectivity} is in $\ACzero$, thus contradicting the fact that  \textsc{st-connectivity} is known  not to be in $\ACzero$ (see \cite{FSS84,Ajtai89}).
  \end{appendixproof}

  Surprisingly the enumeration method used so far for   satisfiability problems presenting a kind of monotonicity   can be used for the enumeration of all assignments satisfying a Krom set of clauses (i.e., a 2-CNF formula) as soon as the literals are considered in an appropriate order.  
  \begin{theorem}\label{th:KromSAT}
   $\EKromsat\in \DelACzeroPrecomp{\ptime}\setminus\DelACzeroSpace$.
\end{theorem}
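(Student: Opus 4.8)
The membership $\EKromsat\in\DelACzeroPrecomp{\ptime}$ is obtained by running, after an appropriate polynomial-time precomputation, the very same lexicographic enumeration already used for transversals (Theorem~\ref{lemma:Etransversal}) and for $\Eihssat$ (Theorem~\ref{th:IHS}); the only new ingredient is a \emph{reorientation of the literals}. During the precomputation I would build the implication graph $G_\Gamma$ on the $2n$ literals (a clause $(\ell\lor\ell')$ contributing the arcs $\bar\ell\to\ell'$ and $\bar\ell'\to\ell$), compute its strongly connected components together with a topological order of the condensation, and test satisfiability; if $\Gamma$ is unsatisfiable the precomputation reports an empty solution set. In the satisfiable case the components come in complementary pairs, and for each variable $x$ I call \emph{positive} the one of $x,\bar x$ whose component is later in the topological order; assignments are then encoded as binary words over this reoriented set of literals, in an arbitrary fixed variable order, and enumerated lexicographically with the convention $1<0$. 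The precomputation finally stores the reachability relation of $G_\Gamma$ as a $2n\times 2n$ bit-matrix, so that the generator circuits can recognise implications.

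Two observations make the ``monotone-style'' enumeration go through, both resting on the fact that complementation reverses the arcs of $G_\Gamma$ and hence maps the condensation to its reverse. First, any literal reachable from a positive literal is again positive; consequently the all-positive word is closed under implication, i.e.\ it is a satisfying assignment --- this is the analogue of ``$1\cdots1$ is the trivial solution''. Second, and for the same reason, committing a variable to its positive value propagates only positive literals, so it never forces another variable to its negative value; this is precisely the (asymmetric) monotonicity exploited in Theorem~\ref{th:IHS}. Therefore, if $y$ is the current solution and $y_1\cdots y_{i-1}1$ is a prefix of $y$, the word $z^i$ obtained from $y_1\cdots y_{i-1}0$ by setting every later variable to its positive value unless it is forced negative (forcing being read off the stored reachability matrix) is, whenever the prefix $y_1\cdots y_{i-1}0$ extends to a solution, the lexicographically smallest such assignment; moreover that prefix extends to a solution if and only if $z^i\models\Gamma$. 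Computing all the $z^i$, testing $z^i\models\Gamma$ (an AND of ORs of at most two literals, hence $\ACzero$), selecting the valid one with largest $i$ and translating back to the original literals is an $\ACzero$ computation given $x^*$; the first circuit just outputs the all-positive word, and a circuit repeats its input when no valid $z^i$ exists. This places $\EKromsat$ in $\DelACzeroPrecomp{\ptime}$.

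For $\EKromsat\notin\DelACzeroSpace$ I would reuse the reduction from the lower-bound part of Theorem~\ref{th:IHS}: from a directed graph $G=(V,A)$ with $s,t\in V$ one builds $\Gamma=\{(s\lor t),(\neg s)\}\cup\{(x\lor\neg y)\mid (x,y)\in A\}$, an $\ACzero$-reduction whose image consists of \emph{Krom} formulas, and $\Gamma$ is satisfiable iff there is no path from $s$ to $t$ in $G$. In any class of the shape $\DelACzeroSpace$ the first solution --- or the indication that the solution set is empty --- is produced by the composition $\calD\circ\calC$ of two constant-depth circuit families, hence is computable in $\ACzero$; so a $\DelACzeroSpace$ algorithm for $\EKromsat$ would decide \textsc{st-connectivity} in $\ACzero$, contradicting \cite{FSS84,Ajtai89}. (Equivalently: the instances above are simultaneously $\Eihssat$- and $\EKromsat$-instances with the same solution sets, so the claim already follows from $\Eihssat\notin\DelACzeroSpace$.)

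The delicate point is the correctness of the upper bound, namely that reorienting the literals by the topological order of the condensation makes the all-positive word a solution and --- more importantly --- guarantees that positive commitments never propagate to negative literals, which is exactly what licenses the greedy ``complete with positive values unless forced'' step and therefore the verbatim reuse of the transversal/$\Eihssat$ enumeration. Everything else (the $\ACzero$-implementability given the precomputed reachability matrix, correctness and constant-depth of the lexicographic successor computation, and the lower bound) is then routine.
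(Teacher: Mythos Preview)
Your proposal is correct and follows the same overall strategy as the paper: build the implication graph, use a topological order of its condensation to reorient the literals, precompute the reachability relation, and then run the lexicographic ``flip a bit and complete greedily'' enumeration from Theorems~\ref{lemma:Etransversal}/\ref{th:IHS}; the lower bound is identical to the paper's.

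There is one noteworthy difference in the correctness argument. You isolate the structural fact that literals whose SCC is later in the topological order (your ``positive'' literals) are closed under reachability; this immediately implies that after reorientation no clause is all-negative, so the formula is literally an IHS instance and the IHS enumeration applies with an \emph{arbitrary} variable order. The paper instead fixes the variable order to be the order of first occurrences in the topological sort and proves correctness by a direct case analysis that repeatedly uses this specific ordering to rule out bad arcs. Your route is a bit more conceptual (it explains \emph{why} the monotone machinery transfers) and yields the mild extra freedom of choosing any variable order; the paper's route is more self-contained but ties correctness to the particular ordering. Either way the resulting $\ACzero$ generator and the $\ptime$ precomputation are the same.
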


\begin{proofsketch}
The proof  builds on the algorithm in~\cite{AspvallPT79} that decides whether a set of Krom clauses  is satisfiable in linear time. A full proof is given in the appendix.

Let $\Gamma$ be a set of $2$-clauses over a set of $n$ variables $V$. We perform the following precomputation steps:
\begin{itemize}
\item Build the associated implication graph, i.e., the     directed graph $G$  whose set of vertices is the set of literals  $V\cup \{\bar v: v\in V\}$. For any  $2$-clause $(l\lor l')$  in $\Gamma$ there are two arcs $\bar l\rightarrow l'$ and  $\bar l'\rightarrow l$ in $G$. 
\item For each literal $l$ compute $\tc(l)$ the set of vertices that are reachable from $l$ in $G$.
\item Compute the set of strongly connected components of $G$. If no contradiction is detected, that is if no strongly connected component contains both a variable $x$ and its negation, then contract each strongly connected component into one vertex. The result of this operation is a DAG, which, by abuse of notation, we also call $G$.
\item Compute a topological ordering of the vertices of $G$.
\item In searching through this  topological ordering,  build an ordered sequence $M$ of $n$ literals corresponding to the first occurrences of each variable.
\end{itemize} 
If the set of clauses is satisfiable, one can enumerate the satisfying assignments given as truth assignments on $M$ in   lexicographic order. The enumeration process is similar in spirit as the one developed in the preceding theorem. 

For the lower bound, the proof given in Theorem~\ref{th:IHS} applies.
\end{proofsketch}

\begin{appendixproof}
{\bfseries (of Theorem~\ref{th:KromSAT})}

The proof  builds on the algorithm in~\cite{AspvallPT79} that decides whether a set of Krom clauses  is satisfiable in linear time. 

Let $\Gamma$ be a set of $2$-clauses over a set of $n$ variables $V$. We perform the following precomputation steps:
\begin{itemize}
\item Build the associated implication graph, i.e., the  directed graph $G$  whose set of vertices is the set of literals   $V\cup \{\bar v: v\in V\}$. For any  $2$-clause $(l\lor l')$  in $\Gamma$ there are two arcs $\bar l\rightarrow l'$ and  $ \bar l'\rightarrow l$ in $G$. 
\item For each literal $l$ compute $\tc(l)$ the set of vertices that are reachable from $l$ in $G$.
\end{itemize}
Observe that $G$ has a  duality property, i.e., if $l\rightarrow l'$ is an arc of $G$, then so is $\bar l'\rightarrow \bar l$. Intuitively $\tc(l)$ contains all literals that have to be assigned 1 in any satisfying assignment in which $l$ is assigned 1.
Moreover,  a given  truth assignment is satisfying if and only   there is no arc $1\rightarrow 0$ in the graph in which every literal has been replaced by its truth value.

From this precomputation we can decide whether $\Gamma$ is satisfiable. Indeed, it is proven in \cite{AspvallPT79} that $\Gamma$ is satisfiable if and only if no strongly connected component of $G$ contains both a variable $x$ and its negation, i.e., there is no variable $x$ such that $x\in\tc(\bar x)$ and $\bar x\in\tc(x)$. We can also detect equivalent literals.

So in the following w.l.o.g  we suppose that $\Gamma$ is satisfiable and  has no equivalent literals. In particular $G$ is then a directed acyclic graph.

We then go on with  two additional precomputation steps:
\begin{itemize}
\item 
Compute a topological ordering of the vertices of $G$, which is denoted $\le$ in the following.
\item 
In searching through this   topological  ordering build an ordered sequence $M=(l_1, \ldots, l_n)$ of $n$ literals corresponding to the first occurrences of all variables i.e. for all $i,j\leq n$ s.t. $i\neq j$: $l_i\neq l_j$ and $l_i\neq \bar{l}_j$. 
\end{itemize}
\smallskip

The precomputation having been performed we  propose an algorithm that enumerates all truth assignments satisfying $\Gamma$, given as truth assignments on $M$, in lexicographic order.  The algorithm is as follows:
\smallskip
\begin{itemize}
\item As a first step  output $0\ldots 0$ the first solution.
\item Let  the set of clauses, $\Gamma$, together with the set  of  lists of vertices reachable from each  vertex $l$ in the implication graph $G$  be the input and $y$ be the last output solution.
\begin{itemize} 
    \item For each   prefix $y_1\ldots y_i$ of $y$  with $y_i=0$ and $i\le n$ consider the word of length $n$, $z^i=y_1\ldots y_{i-1}1w_{i+1}\ldots w_n$,  where for $j\ge i+1$, $w_j=1$ if $\displaystyle l_j\in \tc(l_i) \cup \bigcup_{\{k\vert  k< i,  y_k=1\}}\tc(l_k)$, $w_j=0$ otherwise.
    \item Check whether at least one of these words $z^i$  is an assignment satisfying $\Gamma$.
        \item If yes select the one with the longest common prefix with $y$, that is the satisfying assignment  $z^i$ with the largest $i$ and output it as the next solution. 
        \item Else stop.
\end{itemize}
\end{itemize}

To prove that the algorithm is correct we have to prove the following:
\begin{itemize}
    \item The assignment $l_1=0, \ldots, l_n= 0$ is satisfying.
    \item For all $i\le n-1$, $y_1\ldots y_i\in \{0,1\}^i$ is the prefix of a solution  if and only if 
    $y_1\ldots y_iw_{i+1}\ldots w_n$ 
     where for $j\ge i+1$, $w_j=1$ if $\displaystyle l_j\in \bigcup_{\{k\vert  k\le,  y_k=1\}}\tc(l_k)$, and  $w_j=0$ otherwise, 
    is a solution.
\end{itemize}
Observe that if $y_1\ldots y_iw_{i+1}\ldots w_n$ is a solution, then it is the first solution with prefix $y_1\ldots y_i$ in our lexicographic order. 
\smallskip

The fact that the assignment $l_1=0, \ldots, l_n= 0$ is satisfying follows from the proof in \cite{AspvallPT79}. In  seek of completeness let us reprove it.  In order to get a contradiction suppose it is not the case. This means that there are $l_i$ and $l_j$ in $M$ such that $\bar l_j\rightarrow l_i$. By duality one can suppose w.l.o.g that $i\le j$. On the one hand $i\le j$ implies that in the topological order $l_i\le l_j$, while on the other hand $\bar l_j\rightarrow l_i$ implies $\bar l_j\le l_i$. So we have $\bar l_j\le  l_i\le l_j$, which contradicts the fact  that $l_j$ (and not $\bar l_j$) is in $M$.
\smallskip

Now let us prove that $y_1\ldots y_i$    is the prefix of a solution  if and only if 
    $y_1\ldots y_iw_{i+1}\ldots w_n$ 
     where for $j\ge i+1$, $w_j=1$ if $\displaystyle l_j\in \bigcup_{\{k\vert  k\le i,  y_k=1\}}\tc(l_k)$, and  $w_j=0$ otherwise, 
    is a solution.
    
    One implication is trivial. So, let us suppose that $y_1\ldots y_iw_{i+1}\ldots w_n$ is not a solution. Then,   replacing the literals by their truth values in the graph makes appear an arc $1\rightarrow 0$. There is a discussion on the variables underlying this arc.
    
    If the observed contradiction involves two variables  whose truth values are fixed by  the prefix, then $y_1\ldots y_i$ is not the prefix of any solution.
    
    Suppose now that the observed contradiction involves two variables such that one has its truth value fixed by the prefix, the other not. Then  they are two literals $l_h$ and $l_j$ with $h\le i$ and $j \ge i+1$, such that by duality either:
    \begin{itemize}
        \item $l_h\rightarrow l_j$, $y_h=1$ and $w_j=0$, or 
        \item $l_h\rightarrow \bar l_j$, $y_h=1$ and $w_j=1$, or
        \item $\bar l_h\rightarrow l_j$, $y_h=0$ and $w_j=0$, or
        \item $\bar l_h\rightarrow \bar l_j$, $y_h=0$ and $w_j=1$.
    \end{itemize}
    
    
Suppose that $l_h\rightarrow l_j$, $y_h=1$ and $w_j=0$. The arc   $l_h\rightarrow l_j$ implies that    $l_j\in\tc(l_h)$, which  together with  $y_h=1$ implies $w_j=1$ by definition, a contradiction. 
  
  Suppose now that $l_h\rightarrow \bar l_j$, $y_h=1$ and $w_j=1$.
  On the one hand $l_h\rightarrow \bar l_j$ implies that any satisfying assignment that assigns 1 to  $l_h$, assigns 0 to $l_j$.
  On the other hand $w_j=1$ means   any satisfying assignment that starts by $y_1\ldots y_i$ assigns 1 to  $l_j$ .
  Hence, since  $y_h=1$  we get a contradiction.
  
  An arc $\bar l_h\rightarrow l_j$ cannot occur. Indeed, by duality we have then also $\bar l_j\rightarrow l_h$, which  implies that $\bar l_j \le l_h$ in the topological order. The fact that $h\le j$ implies $l_h\le l_j$. Thus we have $\bar l_j \le l_h\le l_j$. Hence contradicting the fact that $l_j$ (and not $\bar l_j$) is in $M$. 
  
  Finally  an arc $\bar l_h\rightarrow \bar l_j$ cannot occur either. Indeed by duality there is then also the arc $l_j\rightarrow l_h$, which implies $l_j\le l_h$ in the topological order. But, since $h\le j$, we have also $l_h\le l_j$, a contradiction. 
  \smallskip
  
  It remains to deal with the case where the observed contradiction involves two variables whose truth values are not fixed by the prefix
  Then  they are two literals $l_j$ and $l_k$ with $i+1\le j\le k$, such that by duality either:
    \begin{itemize}
        \item $l_j\rightarrow l_k$, $w_j=1$ and $w_k=0$, or 
        \item $l_j\rightarrow \bar l_k$, $w_j=1$ and $w_k=1$, or
        \item $\bar l_j\rightarrow l_k$, $w_j=0$ and $w_k=0$, or
        \item $\bar l_j\rightarrow \bar l_k$, $w_j=0$ and $w_k=1$.
    \end{itemize}

  Suppose $l_j\rightarrow l_k$, $w_j=1$ and $w_k=0$. By definition   $w_j=1$  means that there is an $h\le i$ such that $y_h=1$ and $l_j\in\tc(l_h)$. But then, the arc $l_j\rightarrow l_k$ implies that also $l_k\in\tc(l_h)$, thus contradicting the fact that $w_k=0$.
  
  Suppose now that $l_j\rightarrow \bar l_k$, $w_j=1$ and $w_k=1$.
  On the one hand $l_j\rightarrow \bar l_k$ implies that any satisfying assignment that assigns 1 to $l_j$   , assigns 0 to $l_k$. Since $w_j=1$ this means in particular that any satisfying assignment that starts by $y_1\ldots y_i$ assigns 0 to $l_k$.
  On the contrary $w_k=1$ means any satisfying assignment that starts by $y_1\ldots y_i$ assigns 1 to $l_k$, a contradiction.

    The last two cases cannot occur, for the same reasons as in the discussion above:  the existence of such arcs either contradicts the definition of $M$ or the definition of a topological order.

\smallskip

The precomputation can be done in polynomial time, and when done allows an implementation of the enumeration algorithm with constant-depth circuits, thus proving that $\EKromsat\in \DelACzeroPrecomp{\ptime}$.
\smallskip

For the lower bound, the proof given in Theorem~\ref{th:IHS} applies.
\end{appendixproof}

  We next turn to the special case where   clauses are XOR-clauses, i.e., clauses in which the usual ``or'' connective is replaced by  the exclusive-or connective,
  $\oplus$. Such a clause can be seen as a linear equation over the two elements field $\mathbb{F}_2$.

  \makeprob{\Exorsat}{A set of XOR-clauses $\Gamma$    over a set of variables $V$}
  {an enumeration of all assignments over $V$ that satisfy $\Gamma$}

If we allow  a polynomial precomputation step, then we obtain an $\ACzero$-delay enumeration algorithm for this problem that uses constant memory.  Interestingly this algorithm relies on the efficient enumeration of binary words in a Gray code order that we have seen in the previous section and contrary to the satisfiability problems studied so far does  not provide an enumeration in lexicographic order.  

\begin{theorem}\label{th:XOR}
   $\Exorsat\in \DelACzeroPrecompConstantSpace{\ptime}\setminus \DelACzeroSpace$.
\end{theorem}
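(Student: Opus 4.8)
The plan is to exploit the affine structure of the solution set: the satisfying assignments of an XOR-formula $\Gamma$ over $n$ variables form a coset $x_0+\ker A$ of a linear subspace of $\mathbb{F}_2^n$, with $2^d$ elements where $d=n-\mathrm{rank}(A)$, and to enumerate this coset along a Gray code so that passing from one solution to the next costs only a single vector XOR. During the polynomial-time precomputation I would run Gaussian elimination on the linear system associated with $\Gamma$ to decide satisfiability; if $\Gamma$ is unsatisfiable the precomputation records this so that the generator can signal ``no solution''. Otherwise it produces a particular solution $x_0$ in which all free variables are set to $0$, the sorted list $F=\{j_1<\dots<j_d\}$ of indices of the free variables, and, for $\ell=1,\dots,d$, the kernel basis vector $v_\ell\in\{0,1\}^n$ obtained by setting the $\ell$-th free variable to $1$ and all other free variables to $0$; all of this, suitably encoded, is the precomputed string $x^*$. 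Then $\{x_0+\sum_{\ell\in S}v_\ell : S\subseteq[d]\}$ is exactly the solution set and $S\mapsto x_0+\sum_{\ell\in S}v_\ell$ is a bijection, and the enumeration lists these solutions as $S$ ranges over all subsets of $[d]$, taken in the order of the binary reflected Gray code $G^d$ of Section~\ref{examples}.

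The generator circuit would rely on two observations. First, consecutive words of $G^d$ differ in exactly one position, so two consecutive solutions differ by exactly one basis vector $v_\ell$; hence computing the next solution amounts to a componentwise XOR of the current solution with one of the $d$ hard-wired vectors, and both this XOR of two $n$-bit strings (not the parity of $n$ bits) and the selection of $v_\ell$ among the hard-wired vectors are $\ACzero$ operations. Second, since $x_0$ and each $v_\ell$ agree with the corresponding unit vectors on the free coordinates, the restriction of the current solution $y$ to the coordinates in $F$ is precisely the current subset $S$, i.e.\ the current Gray codeword $g$; so $g$ can be read off $y$ in $\ACzero$ and need not be stored. What cannot be recomputed from $g$ in $\ACzero$ is the parity of its Gray rank (it equals the parity of all bits of $g$), so this single bit $b$ is carried as auxiliary memory — a constant amount. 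Given $(y,b)$, the circuit extracts $g$, applies the Gray-successor rule of Theorem~\ref{lem:enumerating gray code words} — flip position $0$ when $b$ indicates the next step is odd, otherwise flip position $i+1$ where $i$ is the least $1$-position of $g$ (both $\ACzero$, and $i\le d-2$ unless $g$ is already the last word) — translates the flipped position into the index $\ell$, outputs $y\oplus v_\ell$ followed by $1-b$, and recognizes the end of the enumeration by the test $g=10^{d-1}$ (the last word $G^d_{2^d-1}$), in which case it reproduces its input; the degenerate case $d=0$ and the unsatisfiable case are settled directly by the precomputation. The solution is a prefix of the output and only one extra bit is used, so this places $\Exorsat$ in $\DelACzeroPrecompConstantSpace{\ptime}$.

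For the lower bound I would show that already the first solution cannot be computed in $\ACzero$, which rules out $\DelACzeroSpace$ since in that class (no precomputation, no incoming memory) the first solution is $D_{|\cdot|}(C_{|\cdot|}(x))$, a composition of two $\ACzero$ circuits. Given $w=w_1\cdots w_n$, I would build in $\ACzero$ the XOR-formula $\Gamma_w$ over variables $x_0,\dots,x_n$ consisting of the unit clause $(\bar x_0)$ and, for $1\le i\le n$, the clause $(x_i\oplus x_{i-1})$ when $w_i=1$ and $(x_i\oplus\bar x_{i-1})$ when $w_i=0$; each clause depends on a single bit of $w$, so $w\mapsto\Gamma_w$ is an $\ACzero$-reduction. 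This system has the unique satisfying assignment $x_0=0$, $x_i=\bigoplus_{j\le i}w_j$, so its $x_n$-coordinate equals the parity of $w$. If $\Exorsat$ were in $\DelACzeroSpace$, composing $w\mapsto\Gamma_w$ with the assumed $\ACzero$ generator for the first solution and with the $\ACzero$ projection onto the $x_n$-coordinate would yield an $\ACzero$ family computing parity, contradicting \cite{FSS84,Ajtai89}; hence $\Exorsat\notin\DelACzeroSpace$.

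The main obstacle — and the heart of the argument — is the second observation above: recovering the current Gray codeword from the free coordinates of the current solution is exactly what keeps the auxiliary memory constant rather than $\Theta(d)$, while the polynomial-time precomputation is genuinely necessary because the $\ACzero$ generator cannot itself perform Gaussian elimination, and the lower-bound reduction confirms that without it even the first solution is out of reach.
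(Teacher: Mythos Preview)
Your proposal is correct and follows essentially the same route as the paper for the upper bound: polynomial-time Gaussian elimination to extract a particular solution, the free variables, and the ``influence'' vectors (your kernel basis $v_\ell$ is exactly the indicator of the paper's influence list $L(x_\ell)$ together with the flipped free bit), then Gray-code enumeration of the free part with a single parity bit as auxiliary memory. Your observation that the current Gray word can be read off the free coordinates of the current solution is the same as the paper's convention that the free variables form the prefix of each solution, so no extra storage is needed for it.

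For the lower bound you take a slightly different reduction than the paper: the paper builds, from $w$, the system $\{x_i=w_i\}_{i\le n}\cup\{x_1\oplus\cdots\oplus x_n=0\}$, which is \emph{satisfiable} iff $w$ has even weight, and argues that an $\ACzero$ generator for the first solution would decide satisfiability and hence parity. You instead build a system with a \emph{unique} solution whose $x_n$-coordinate is the parity of $w$, and read it off the first output. Both arguments are valid and equally short; yours has the minor advantage of not touching the ``empty solution set'' convention, while the paper's avoids the case split on $w_i$ in the clause construction.
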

 
\begin{proofsketch}
Observe that a set of XOR-clauses  $\Gamma$ over a set of variables $V=\{x_1,\ldots x_n\}$ can be seen as a linear system over $V$ on the two elements field $\mathbb{F}_2$.
As a consequence enumerating  all assignments over $V$ that satisfy $\Gamma$ comes down to enumerating all solutions of the corresponding linear system.

As a precomputation step  we apply Gaussian elimination in order to obtain an equivalent triangular system. If the system has no solution stop.
    Otherwise we can suppose that the linear system is of rank $n-k$ for some $0\le k\le n-1$,  and without loss of generality that $x_1,\ldots, x_k$ are free variables, whose assignment determines the assignment of all other variables in the triangular system. 
    We then compute a first solution $s_0$ corresponding to $x_1,\ldots, x_k$ assigned  $0\ldots 0$. 
    Next, for each $i=1, \ldots, k$ compute the  solution $s_i$ corresponding to all variables in  $x_1, ,\ldots x_k$ assigned  0 except $x_i$ which is  assigned  1. Compute then the influence list of $x_i$, 
    $L(x_i)=\{j \mid k+1\le j\le n,  s_0(x_j)\ne s_i(x_j)\}$. The influence list of $x_i$ gives the bits that will be changed when going from a solution to another one in flipping only the bit $x_i$ in the prefix corresponding to the free variables. 
    Observe that this list does not depend on the solution ($s_0$ in the definition) we start from.
 
With this precomputation  we start our enumeration procedure, which uses the enumeration of binary prefixes of length $k$ in a Gray code order as a subprocedure. 
The algorithm is described in the appendix.
\end{proofsketch}

\begin{appendixproof} 
{\bfseries (of Theorem \ref{th:XOR})}
Observe that a set of XOR-clauses  $\Gamma$ over a set of variables $V=\{x_1,\ldots x_n\}$ can be seen as a linear system over $V$ on the two elements field $\mathbb{F}_2$.
As a consequence enumerating  all assignments over $V$ that satisfy $\Gamma$ comes down to enumerating all solutions of the corresponding linear system.
As a precomputation step  we propose the following procedure:
\begin{itemize}
    \item Apply Gaussian elimination in order to obtain an equivalent triangular system. If the system has no solution stop.
    Otherwise we can suppose that the linear system is of rank $n-k$ for some $0\le k\le n-1$,  and without loss of generality that $x_1,\ldots, x_k$ are free variables, whose assignment determines the assignment of all other variables in the triangular system. 
    \item Compute a first solution $s_0$ corresponding to $x_1,\ldots, x_k$ assigned  $0\ldots 0$. 
    \item For each $i=1, \ldots, k$ compute the  solution $s_i$ corresponding to all variables in  $x_1, ,\ldots x_k$ assigned  0 except $x_i$ which is  assigned  1. Compute then the influence list of $x_i$, 
    $L(x_i)=\{j \mid k+1\le j\le n,  s_0(x_j)\ne s_i(x_j)\}$. The influence list of $x_i$ gives the bits that will be changed when going from a solution to another one in flipping only the bit $x_i$ in the prefix corresponding to the free variables. 
    Observe that this list does not depend on the solution ($s_0$ in the definition) we start from.
\end{itemize}

With this precomputation  one can easily output a first solution, corresponding to the prefix $x_1=\ldots =x_k=0$, and then start our enumeration procedure, which uses the enumeration of binary prefixes of length $k$ in a Gray code order as a subprocedure. 
The algorithm is as follows:
\begin{itemize}
\item As a first step compute a first solution $s_0$ corresponding to $x_1,\ldots, x_k$ assigned  $0\ldots 0$. If it exists, output it, else stop. 
\item Let the triangular system obtained from Gaussian elimination  together with the set of influence lists of all free variables  $\{L(x_i)| 1\le i\le k\}$ be the input and $s=ww_{k+1}\ldots w_n$, where $w$ is a prefix of length $k$, be the last output solution.
\begin{itemize} 
    \item Compute $w'$ the successor of $w$ in a Gray code order enumeration.
    \item If it exists, then $w'$ and $w$ differ only on one position, say the $i$th. Compute $s'=w'w'_{k+1}\ldots w'_n$ where for $j\ge k+1$, $w'_j=1-w_j$ if $j\in L(x_i)$, and $w'_j=w_j$ otherwise. Output $s'$ as the next solution. 
        \item Else stop.
\end{itemize}
\end{itemize}

The precomputation runs in polynomial time. The system has $2^k$ solutions, one for each possible prefix on $x_1,\ldots, x_k$. According to Theorem \ref{lem:enumerating gray code words} the enumeration of these prefixes in a Gray code order can be done with $\ACzero$-delay in using a constant space. 
Two successive words differ exactly on one index $i$. We can then go from one solution to the next one in flipping in the former solution the variables in the influence list of $x_i$. Since the influence lists have been computed in the precomputation step, this can be done done with constant depth circuits with no additional memory, thus concluding the proof.

To show that $\Exorsat\not\in \DelACzeroSpace$ we remark that a word $w=w_1...w_n\in\{0,1\}^n$ has an even number of ones iff the following set of XOR-clauses $\Gamma$ has a solution:
\[
\Gamma=\{x_1=w_1,x_2=w_2,...,x_n=w_n, x_1\oplus x_2\oplus \cdots\oplus x_n=0\}.\]
  
Moreover, all words of length $n$ can be mapped to such systems of the same length. 
\end{appendixproof}

\section{Separations of Delay Classes}
\label{separations}

In the previous results we already presented a few lower bounds, but now we will systematically strive to separate the studied classes.  

As long as no precomputation is allowed, we are able to separate all delay classes (only the case of unbounded memory, so the ``ìncremental class'' resists). With precomputation, the situation seems to be more complicated. We obtain only a conditional separation of the class with constant memory from the one without memory at all.

\subsection{Unconditional Separations for Classes without Precomputation}

\begin{theorem}\label{th:nospacestriclyinconstspace}
  $\DelACzero \subsetneq \DelACzeroConstantSpace$
\end{theorem}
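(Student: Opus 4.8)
The goal is to show $\DelACzero \subsetneq \DelACzeroConstantSpace$. Containment is immediate from the definitions: a $\DelACzero$ enumerator is a special case of a $\DelACzeroConstantSpace$ enumerator in which the auxiliary memory has size $0$ (take $y_i^* = y_i$ and let $D_{|\cdot|}$ be the identity), so $\DelACzero \subseteq \DelACzeroConstantSpace$. The work is therefore to exhibit a witness problem separating the two, and the natural candidate is the problem of enumerating all binary words of length $n$ in a Gray code order, given $1^n$ as input. By Theorem~\ref{lem:enumerating gray code words} this problem is in $\DelACzeroConstantSpace \setminus \DelACzeroPrecomp{\ptime}$, and in particular it is \emph{not} in $\DelACzeroPrecomp{\ptime}$.

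The plan is then to observe that $\DelACzero \subseteq \DelACzeroPrecomp{\ptime}$ — indeed $\DelACzero = \DelACzeroPrecomp{\calK}$ for the trivial precomputation class, and certainly any $\DelACzero$ enumerator is a $\DelKPrecomp{\ptime}$ enumerator by letting the precomputation machine $M$ be the identity $M(x) = x$. Hence if the Gray-code enumeration problem were in $\DelACzero$, it would be in $\DelACzeroPrecomp{\ptime}$, contradicting the lower bound half of Theorem~\ref{lem:enumerating gray code words}. Since it \emph{is} in $\DelACzeroConstantSpace$ by the upper bound half of that theorem, the inclusion $\DelACzero \subseteq \DelACzeroConstantSpace$ is strict.

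I do not expect a real obstacle here: the entire content has already been packaged into Theorem~\ref{lem:enumerating gray code words}, whose lower bound rests on the classical fact that parity is not in (non-uniform) $\ACzero$~\cite{FSS84,Ajtai89}. The only point requiring a line of care is spelling out why membership in $\DelACzero$ implies membership in $\DelACzeroPrecomp{\ptime}$ (so that the stated lower bound applies), and why the one bit of memory used in the upper bound — tracking the parity of the step index — genuinely fits the $\DelACzeroConstantSpace$ budget $|y_i^*| \le |y_i| + c$ with $c = 1$. Both are routine unwindings of Definitions~\ref{def:DelAC0Precomp} and~\ref{def:DelAC0Space}.
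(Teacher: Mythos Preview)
Your argument has a genuine gap: the witness you propose is not an element of the class $\DelACzeroConstantSpace$ in the sense of Definition~\ref{def:DelAC0Space}. The classes $\DelACzero$, $\DelACzeroConstantSpace$, $\DelACzeroPrecomp{\ptime}$, \dots\ are classes of problems $\enum{R}$, and membership means that \emph{some} enumeration of $\sol{R}{x}$ is producible by the circuits; the order is chosen by the algorithm, not by the problem. The ``problem'' in Theorem~\ref{lem:enumerating gray code words} is not of this form: it fixes the output order to be the Gray code order. If you strip the order constraint, the underlying relation is $R=\{(1^n,y):|y|=n\}$, and $\enum{R}$ is trivially in $\DelACzero$ (enumerate lexicographically, as noted in Theorem~\ref{th:gray code enumeration}(1)). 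So Theorem~\ref{lem:enumerating gray code words} does \emph{not} give an $\enum{R}$ problem in $\DelACzeroConstantSpace\setminus\DelACzero$. The paper itself makes exactly this point in Section~\ref{separations:conditional}: the Gray code lower bound ``concerns only the special case of ordered enumeration, and does not say anything about the general case.''

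What the paper does instead is build a relation $R_L$ whose \emph{solution set} encodes parity, so that no choice of output order can avoid the hardness. For input $x$ with $|x|=n$ and $m=\lceil\log n\rceil+1$, one takes $\sol{R_L}{x}$ to be all $m$-bit strings except $0^m$ and $1^m$, together with $1^m$ if $x$ has even weight and $0^m$ otherwise. The upper bound uses one memory bit to accumulate parity across the (at least $n$) enumeration steps; the lower bound observes that a memoryless $\ACzero$ enumerator would let you run all successor computations in parallel over the fixed set $A$ and read off which of $0^m$, $1^m$ appears, computing parity in $\ACzero$. The key difference from your attempt is that here the hardness lives in the set $\sol{R_L}{x}$ itself, not in an externally imposed output order.
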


\begin{proof}
  Let $x\in \{0,1\}^*$, $|x|=n\in \N^*$, $x=x_1\dots x_n$. We denote by $m=\lceil \log n\rceil+1$.
  Let $R_L$ be defined for all $x\in \{0,1\}^*$ as the union of the two following sets $A$ and $B$:
  
\begin{itemize}
  \item $A=\bigl\{\,y\in \{0,1\}^* \big| |y|=m, y\neq 0^{m} , y\neq 1^{m} \bigr\}$
  \item $B=\{\,1^{m}\}$  if $x$ has an even number of ones, else $B=\{\,0^{m}\}$.
\end{itemize}
  
We denote by $z_1,...,z_t$ an enumeration of elements of $A$. Clearly, $|R_L(x)|=t+1$ and $t\geq n$. 
To show that $R_L\in \DelACzeroConstantSpace$, we use the enumeration of elements of $A$ (which is easy) and one additional memory bit
that is transferred from one step to the other to compute $\pb{Parity}$. Indeed, we build families of circuits $(C_n)$ and $(D_n)$ according to Definition~\ref{def:DelAC0Space} as follows.

  \begin{itemize}
    \item First $C_{|\cdot|}(x)$ computes $y_1^*=z_1b^1$ where $b^1=x_1$, and $D_{|\cdot|}(y_1^*)=z_1$. 
    \item For $1<i\leq t$, the circuit $C_{|\cdot|}(x,y_{i-1}^*)$ computes $y_{i}^*$, where $y^*_{i}=z_{i}b^{i}$ with $b^{i}=b^{i-1}\oplus x_{i}$ if $i\leq n$, and $b^{i}=b^{i-1}$ else,
    and $D_{|\cdot|}(y_{i}^*)=z_{i}$.
    \item After $t$ steps, the memory bit $b^t$ contains a $0$ if and only if the number of ones in $x$ is even. According to this, we either output 
    $1^{m}$ or $0^m$ as last solution.
  \end{itemize}

  Note that the size of the solutions is $m$, the size of the memory words above is $m+1$, hence we need constant amount of additional memory. The circuit families $(C_n)$ and $(D_n)$ are obviously DLOGTIME-uniform.

  Suppose now that $R_L\in \DelACzero$ and let $(C_n)$ be the associated family of enumeration circuits. 
  We construct a circuit family as follows:
  We compute in parallel all $C_{|\cdot|}(x)$ and $C_{|\cdot|}(x,z_i)$ for $1\leq i\leq t$. In this way, we will obtain among other solutions either $0^m$ or $1^m$. We accept in the first case. Note that the $z_i$ are the same for all inputs $x$ of the same length. Thus, we obtain an $\ACzero$ circuit family for parity, contradicting \cite{FSS84,Ajtai89}.
\end{proof}

By extending the above approach, one can prove the following separation:

\begin{theorem}\label{th:constant space strictly in poly space}
  $\DelACzeroConstantSpace \subsetneq \DelACzeroPolySpace$
\end{theorem}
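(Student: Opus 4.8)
The plan is to \emph{extend the construction used in Theorem~\ref{th:nospacestriclyinconstspace}}, replacing the single hard bit (there $B$ encodes the parity of $x$) by a super‑logarithmic number of mutually independent hard bits, which polynomial memory can accumulate during a long sweep but constant memory provably cannot carry. Fix $\ell=\ell(n):=\lceil\log^2 n\rceil$, and for an input $x$ of length $n$ partition the positions $1,\dots,n$ into $\ell$ consecutive blocks $B_1,\dots,B_\ell$ of size $N=\lfloor n/\ell\rfloor$ (discarding a few leftover bits); let $R_a(x)$ be the parity of $x$ restricted to $B_a$. Fixing a common solution length $m=\Theta(\log^2 n)$, I would set $\sol{R_L}{x}=A\cup\{\beta(x)\}$, where $A$ is \emph{independent of $x$} (as in Theorem~\ref{th:nospacestriclyinconstspace}) and consists of $n$ ``filler'' words of length $m$ encoding $1,\dots,n$ in a fixed, easily decodable format, and $\beta(x)$ is the word of length $m$ whose first $\ell$ bits are $R_1(x)\cdots R_\ell(x)$, padded with $0$'s in a format distinguishable from the fillers.

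For membership in $\DelACzeroPolySpace$ I would enumerate $A$ in the order $1,2,\dots,n$ while keeping in the auxiliary memory the $\ell$ running partial parities $\rho_1,\dots,\rho_\ell$: at step $j$, on input $(x,\text{filler }j,\rho)$, the circuit selects the bit $x_{j+1}$, computes the block index $b=\lceil (j+1)/N\rceil$, replaces $\rho_b$ by $\rho_b\oplus x_{j+1}$, and outputs filler $j{+}1$ with the updated $\rho$; all of this (bit selection, integer division by the known constant $N$, one xor) is in $\ACzero$. After the $n$‑th filler the memory holds $\rho_a=R_a(x)$ for all $a$, so the last step outputs $\beta(x)$, and the ``repeat the last word'' convention signals termination. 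The memory words have length $m+\ell=\mathrm{poly}(n)$, so $\enum{R_L}\in\DelACzeroPolySpace$.

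For $\enum{R_L}\notin\DelACzeroConstantSpace$, suppose for contradiction that $\calC,\calD$ and a constant $c$ witness membership, and take $n$ large. Since $\{\beta(x)\}$ is a singleton, in any enumeration of $\sol{R_L}{x}$ the word $\beta(x)$ is either the first solution or is immediately preceded by a memory‑solution of the form $z\,e$ with $z\in A$ and $|e|\le c$; in either case $\beta(x)$ is the length‑$m$ prefix of $C_{|\cdot|}(x)$ or of $C_{|\cdot|}(x,z\,e)$. Hence $\beta(x)\in\{\,w_\pi(x):\pi\in\Pi\,\}$, where $\Pi=\{\varepsilon\}\cup(A\times\{0,1\}^{\le c})$ has size $1+n\cdot 2^{c+1}=\mathrm{poly}(n)$ and each $w_\pi$ is a \emph{fixed} $\ACzero$ function of $x$ (hard‑wire $\pi$ into $\calC$, take the length‑$m$ prefix). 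So for \emph{every} $x\in\{0,1\}^n$ some $w_\pi$ reproduces all of $R_1(x),\dots,R_\ell(x)$ on its first $\ell$ bits. On the other hand, a random‑restriction argument (Håstad's switching lemma; see also \cite{FSS84,Ajtai89}) shows that a single fixed $\ACzero$ circuit of polynomial size and constant depth can agree \emph{simultaneously} with $(R_1,\dots,R_\ell)$ on only a tiny fraction of inputs: after a $p$‑restriction with $p=\Theta(1/\mathrm{polylog}\,n)$ the circuit collapses, except with probability $n^{-\omega(1)}$, to a decision tree of depth $t=O(\log n)$, and every block then keeps $\Omega(pN)=\omega(\log n)\gg t$ free bits; along each root‑to‑leaf path at most $t$ blocks are ``touched'', so at least $\ell-t\ge\ell/2$ blocks remain entirely free, and the event ``output bit $a$ equals $R_a$ for all $a$'' then has probability at most $2^{-(\ell-t)}$. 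Summing over the $\mathrm{poly}(n)$ elements of $\Pi$ gives $\Pr_x[\exists\pi\in\Pi:\ w_\pi(x)=\beta(x)]\le \mathrm{poly}(n)\cdot\big(2^{-\Omega(\ell)}+n^{-\omega(1)}\big)<1$ for $n$ large, contradicting that this probability is $1$. (For $\ell=1$ this is exactly the parity non‑membership of Theorem~\ref{th:nospacestriclyinconstspace}; the new point is that $\ell=\omega(\log n)$ independent hard bits defeat \emph{every} constant amount of memory.)

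The step I expect to require the most care is exactly the last paragraph. Unlike in Theorem~\ref{th:nospacestriclyinconstspace}, feeding $\calC$ all inputs $(x,z\,e)$ is no longer harmless: for a ``wrong'' memory value $e$ the output of $\calC$ is unspecified and could well look like a well‑formed $\beta$‑word carrying \emph{incorrect} parities, so one cannot hope to single out the genuine $\beta(x)$ among the $\mathrm{poly}(n)$ candidates. The fix is to argue probabilistically that \emph{no} fixed $\ACzero$ candidate can match all $\ell$ blocks at once on more than a negligible fraction of inputs, which forces the choice $\ell=\omega(\log n)$ (so that $2^{-\Omega(\ell)}$ beats $\mathrm{poly}(n)$) and a use of the switching lemma with a small enough failure probability ($n^{-\omega(1)}$, achievable by taking the decision‑tree depth $t$ a large constant times $\log n$) so that the union bound over $\Pi$ still closes. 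The remaining ingredients — the $\ACzero$ implementability of the filler sweep and of $\beta$'s format, $A$ being independent of $x$, and the uniform solution length $m$ — are routine, as in Theorem~\ref{th:nospacestriclyinconstspace}.
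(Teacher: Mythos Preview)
Your strategy is sound but takes a genuinely different route from the paper. The paper does \emph{not} widen the hard payload; it \emph{shrinks the solutions}. It reuses exactly the parity relation of Theorem~\ref{th:nospacestriclyinconstspace} but with solution length $m=\lceil\log\log n\rceil+1$, so that $|A|=t\approx\log n$. Membership in $\DelACzeroPolySpace$ is obtained by storing in (linear) memory one level of the parity tree per step, i.e.\ $b^i_j=b^{i-1}_{2j-1}\oplus b^{i-1}_{2j}$. For the lower bound, the entire memory trace across all $t+1$ steps has length at most $c\,t=O(\log n)$, so there are only $n^{O(c)}$ possible traces; an $\ACzero$ circuit tries them all in parallel, checks consistency with $\calC$, and reads off whether $0^m$ or $1^m$ appears --- placing parity in $\ACzero$, a contradiction. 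Thus the paper exploits \emph{short enumerations} (few solutions $\Rightarrow$ short trace $\Rightarrow$ brute force), using ``parity $\notin\ACzero$'' purely as a black box. Your approach instead keeps the enumeration long and packs $\omega(\log n)$ independent parities into the distinguished solution, then argues that no polynomial family of $\ACzero$ candidates can cover $\beta(x)$. The paper's argument is shorter and more elementary; yours needs the switching lemma quantitatively but is more robust --- it does not rely on the trace being enumerable and would still separate, say, $O(\log n)$ from polynomial auxiliary memory.

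One step in your sketch needs tightening. You write that after the restriction ``the circuit collapses to a decision tree of depth $t$'' and then bound the touched blocks along a single root--to--leaf path by $t$. But $w_\pi$ has $\ell$ output bits, and H{\aa}stad's lemma gives one depth-$t$ tree \emph{per output}, not a single tree for the whole $\ell$-bit function; the union of touched variables can be as large as $\ell t\ge\ell$, so the ``$\ell-t$ untouched blocks'' count does not follow. The conclusion is nevertheless correct: each output $g_a$ then has polynomial degree $\le t$, so $\prod_{a\in T}g_a$ has degree $\le|T|t$, while $\prod_{a\in T}R_a=\chi_{\bigcup_{a\in T}S_a}$ has degree $\sum_{a\in T}|S_a|>|T|t$; by orthogonality only $T=\emptyset$ survives and $\Pr[\forall a:\,g_a=R_a]=2^{-\ell}$ exactly. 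A second minor point: ``integer division by the known constant $N$'' is not an $\ACzero$ primitive for DLOGTIME-uniform families; just carry the current block index and a within-block counter in the memory and increment, which is unproblematic.
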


\begin{appendixproof}
\textbf{(of Theorem~\ref{th:constant space strictly in poly space})}.
  For a given $x\in \{0,1\}^*$, $n=|x|$, we denote by $m=\lceil \log\log n\rceil+1$.
  Let $R_L$ be defined for all $x\in \{0,1\}^*$ as the union of the two following sets $A$ and $B$:
 
\begin{itemize}
  \item $A=\bigl\{\,y\in \{0,1\}^* \big| |y|=m, y\neq 0^{m} , y\neq 1^{m} \bigr\}$
  \item $B=\{\,1^{m}\}$  if $x$ has an even number of ones, else $B=\{\,0^{m}\}$.
\end{itemize}
  
We denote by $z_1,...,z_t$ an enumeration of elements of $A$. Clearly, $|R_L(x)|=t+1$ and $t\geq \log n$. 

To show that $R_L\in \DelACzeroPolySpace$, we mimic the proof of the computation of the parity of $n$ input bits with a tree of binary parity-gates.
Indeed, we build families of circuits $(C_n)$ and $(D_n)$ according to Definition~\ref{def:DelAC0Space} as follows.

  \begin{itemize}
    \item First $C_{|\cdot|}(x)$ computes $y_1^*=z_1b^1$ where $b^1_j=x_{2j-1}\oplus x_{2j}$; if $n$ is odd then $b^1_{\lceil n/2\rceil}=x_n$. We set $D_{|\cdot|}(y_1^*)=z_1$. 
    \item For $1<i\leq t$, the circuit $C_{|\cdot|}(x,y_{i-1}^*)=y_{i}^*$, where $y^*_{i}=z_{i}b^{i}$ where $b^i_j=b^{i-1}_{2j-1}\oplus b^{i-1}_{2j}$ for $1\leq j\leq \lceil n/2\rceil$; if $|b^{i}|$ is odd then the last bit of $b^i$ is the last bit of $b^{i-1}_n$. 
    We set $D_{|\cdot|}(y_{i}^*)=z_{i}$.
    \item After $t$ steps, the bit $b^t_1$ contains a $0$ if and only if the number of ones in $x$ is even. According to this, we either output 
    $1^{m}$ or $0^m$ as last solution.
  \end{itemize}


  Note that the memory words above are of linear size.

 Suppose now that $R_L\in \DelACzeroConstantSpace$. Let $(C_n)$, $(D_n)$ be the associated families of enumeration circuits and let $c\in \N$ be the space allowed for additional memory. 
 Fix an input $x$, $|x|=n$. Let $t$ be as above.
 Now for any sequence $b^1,\dots,b^{t+1}$ of additional memory, where one $b^i$ is empty and all the others have length exactly $c$, we can check in $\ACzero$ that this is a correct sequence, which means that for every $b^i$, $C_{|\cdot|}(x,z_ib^i)$ is of the form $z_jb^j$ for some $j$. Here, for the empty memory word $\varepsilon$ we let $z\varepsilon=\varepsilon$. For the correct sequence, we check as in the proof of Theorem~\ref{th:nospacestriclyinconstspace} if the solution $0^m$ or $1^m$ will appear.
 
 Since the length of a sequence of additional memory words is at most $ct$, their number is polynomial in $n$. Hence we obtain an $\ACzero$ circuit for parity which does not exist.
\end{appendixproof}

The \textsc{parity} problem can be seen as an enumeration problem: given $x$, one output the unique solution $1$ if the number of ones in $x$ is even. One outputs $0$ if it is odd. Since as a function problem, \textsc{parity} can not be in $\DelACzeroPolySpace$ (the fact there is only one solution makes memory useless). It is obviously in $\DelayP$. This implies that $\DelACzeroPolySpace\subsetneq \DelayP$. Putting all the previous results together, we conclude:

\begin{corollary}
 $\DelACzero\subsetneq \DelACzeroConstantSpace \subsetneq \DelACzeroPolySpace\subsetneq \DelayP$. 
\end{corollary}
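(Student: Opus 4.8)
The plan is to chain together the three strict inclusions, checking that the non-strict containments are immediate from the definitions and that each strictness is witnessed by a concrete enumeration problem already exhibited in the text. First I would note that $\DelACzero\subseteq\DelACzeroConstantSpace\subseteq\DelACzeroPolySpace$ holds trivially: an enumerator with no auxiliary memory is a special case of one with constant auxiliary memory (take the $C$-circuit to output the bare solution and leave the memory cells untouched, and let the $D$-circuit be the identity), and an enumerator with constant memory is likewise a special case of one with polynomial memory. For the last inclusion, a $\DelACzeroPolySpace$ enumerator is in particular a $\DelACzeroPrecompPolySpace{\ptime}$ enumerator (with trivial precomputation), so since $\ACzero\subseteq\ptime$ the earlier containment theorem ($\calK,T\subseteq\ptime\Rightarrow\DelKPrecompPolySpace{T}\subseteq\DelayP$) gives $\DelACzeroPolySpace\subseteq\DelayP$.

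For the strictness of the first two inclusions there is nothing to add: Theorem~\ref{th:nospacestriclyinconstspace} gives $\DelACzero\subsetneq\DelACzeroConstantSpace$ and Theorem~\ref{th:constant space strictly in poly space} gives $\DelACzeroConstantSpace\subsetneq\DelACzeroPolySpace$, both via predicates $R_L$ whose solution set consists of a large ``padding'' block $A$ (serving as a clock long enough to run the parity computation across the enumeration steps) together with a single final solution that encodes the parity of $x$.

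The one step that requires a genuine, if short, argument is $\DelACzeroPolySpace\subsetneq\DelayP$, and this is where I would be most careful. Take \pb{Enum-Parity}, whose solution set on input $x$ is the single bit $\big(\sum_i x_i\big)\bmod 2$; it lies in $\DelayP$ because parity is polynomial-time computable. Suppose it were in $\DelACzeroPolySpace$ via families $(C_n),(D_n)$. Since there is exactly one solution $y_1$, which by definition is a prefix of $y_1^*=C_{|\cdot|}(x)$, the first bit of the $\ACzero$ circuit $C_{|\cdot|}$ applied to $x$ already equals the parity of $x$; projecting onto that bit yields an $\ACzero$ family computing parity, contradicting \cite{FSS84,Ajtai89}. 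The subtlety to flag is precisely that polynomial auxiliary memory buys nothing here: for a problem with a unique solution the whole enumeration is carried out by the first circuit alone, so the extra memory, available only for producing later solutions, is idle. Combining the four containments with these three separating problems establishes the stated chain of strict inclusions.
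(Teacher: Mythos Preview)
Your proposal is correct and follows essentially the same approach as the paper: the first two strict inclusions are exactly Theorems~\ref{th:nospacestriclyinconstspace} and~\ref{th:constant space strictly in poly space}, and the final separation is witnessed by \textsc{Enum-Parity} with the same observation that a problem with a unique solution cannot benefit from auxiliary memory. Your write-up is more explicit about the non-strict containments and about why the first $\ACzero$ circuit alone would compute parity, but the underlying argument is identical.
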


\subsection{Conditional Separation for Classes with Precomputation}
\label{separations:conditional}


If precomputation is allowed, the separation proofs of the previous subsection no longer work; in fact we do not know if the corresponding separations hold. However, under reasonable complexity-theoretic assumptions we can at least separate the classes $\DelACzeroPrecomp\ptime$ and $\DelACzeroPrecompConstantSpace{\ptime}$. Note that in Theorem~\ref{lem:enumerating gray code words} we already proved a separation of just these two classes, but this concerns only the special case of ordered enumeration, and does not say anything about the general case.
We find it interesting that the proof below relies on a characterization of the class $\PSPACE$ in terms of regular leaf-languages or \emph{serializable computation} \cite{helascvowa93,Vollmer99a}. 
The proof will be given in the appendix.

\begin{theorem}\label{thm:condsep}
If $\NP\neq \PSPACE$, then $\DelACzeroConstantSpace \setminus \DelACzeroPrecomp{\ptime}\ne\emptyset$.
\end{theorem}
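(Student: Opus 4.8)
The goal is to exhibit an enumeration problem that lies in $\DelACzeroConstantSpace$ but not in $\DelACzeroPrecomp{\ptime}$, under the assumption $\NP\neq\PSPACE$. The natural source of such a problem is the characterization of $\PSPACE$ via regular leaf-languages / serializable computations: a language $L$ is in $\PSPACE$ iff there is a polynomial-time nondeterministic machine whose computation tree, read left-to-right at the leaves, produces a word that (for $x\in L$) lies in a fixed regular language $A$. Equivalently, and more useful for us, $\PSPACE$ membership can be witnessed by a polynomial-time computable ``next-leaf'' function together with a constant-size finite automaton that is run over the exponentially long leaf string while only a constant amount of state (the automaton's current state plus a polynomial-size pointer to the current leaf) is maintained. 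The idea is to encode the run of such an automaton as an enumeration: the ``solutions'' are the successive leaf-strings (or leaf indices paired with automaton states), the constant memory carries the automaton state, and at the end of the enumeration the final automaton state reveals whether $x\in L$.

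First I would fix a $\PSPACE$-complete language $L$ (say, QBF) and its serializable/leaf-language presentation: a polynomial $p$, a DLOGTIME-uniform $\ACzero$-computable (or at worst $\ptime$-computable; see below) successor function on leaf-addresses of length $p(|x|)$, an $\ACzero$-computable function giving the symbol at a given leaf, and a fixed DFA $M_A$ for the regular witness language $A$. Define $R_L$ so that $\sol{R_L}{x}$ is the sequence of words $\langle u, q\rangle$ where $u$ ranges over leaf-addresses in their natural order and $q$ is the state of $M_A$ after reading the leaf-symbols up to and including $u$; the very last solution additionally encodes the accept/reject verdict so that the problem genuinely ``computes'' $L$. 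Then $R_L\in\DelACzeroConstantSpace$: the circuit $C$, given $x$ and the previous solution $\langle u,q\rangle$, computes the successor address $u'$ (this is the one subtle point — I must choose a presentation in which the successor and symbol functions are in $\ACzero$, which is standard for the usual depth-first numbering of a polynomial-height tree, since incrementing a $p(n)$-bit counter and chasing a few transition steps is $\ACzero$), reads the symbol at $u'$, applies the constant-size transition table of $M_A$ to $q$, and outputs $\langle u',q'\rangle$; the constant memory is exactly the bounded state $q$ (plus whatever constant overhead the definition allows beyond $|y_i|$, which is fine since the address $u'$ is reproduced as part of the output and only $q$ needs to be ``extra''). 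The decoder $D$ is the identity (or strips the bookkeeping), and the termination condition fires when the address counter overflows.

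Second, the lower bound. Suppose $R_L\in\DelACzeroPrecomp{\ptime}$, witnessed by a $\ptime$ precomputation $M$ and an $\ACzero$ circuit family with no memory. I want to conclude $L\in\NP$ — or, more carefully, I want to push $L$ into a class that collapses with $\NP$ under standard hypotheses, contradicting $\NP\neq\PSPACE$. The key leverage is: without memory, each circuit invocation $C_{|\cdot|}(x^*, y_i)=y_{i+1}$ depends only on $x^*$ and the single previous solution. So given a claimed solution $y_i=\langle u,q\rangle$, one $\ACzero$ evaluation verifies the step to $y_{i+1}$. To decide $x\in L$ it suffices to guess the final solution (which contains the verdict) and verify it is reachable — but the enumeration is exponentially long, so a single nondeterministic guess of the whole chain is too expensive. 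Instead I would guess the sequence of automaton states at a polynomial number of ``checkpoint'' leaf-addresses and, because the memoryless step function is deterministic and the successor-on-addresses is $\ACzero$, certify consistency between consecutive checkpoints by... — and here is precisely where the work lies.

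\textbf{The main obstacle.} The crux is bridging the exponential length of the enumeration with a polynomial-time (or $\NP$) verifier. The honest argument does not go through a naive guess-and-check; rather it re-uses the serializability machinery in reverse: the memoryless $\ACzero$ step function, iterated $2^{p(n)}$ times starting from the first solution $y_1=C_{|\cdot|}(x^*)$, is itself a ``leaf-string-like'' object, and the verdict in the last solution is a function of $x^*$ computable by iterating a constant-width device — so the whole enumeration is a serializable computation whose ``automaton'' is the memoryless circuit acting on the bounded part of its state. Because the bounded part here is not constant-width in the DFA sense but is the full previous solution, the real content is to show that membership-in-$L$ reduces, via $M$ and $C$, to a problem that a polynomial-time machine can decide with an $\NP$ oracle, or (matching the statement) that $\PSPACE\subseteq\NP$ follows. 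I expect the cleanest route is: the precomputation $M(x)=x^*$ plus the $\ACzero$ step circuit give a $\ptime$-computable ``one-step'' map on polynomially-long configurations, whose $2^{p(n)}$-fold iterate must be evaluated; evaluating a $\ptime$ map's exponential iterate is exactly a $\PSPACE$-type task, but the constant-memory upper bound we assumed forces the ``real'' state to be bounded, collapsing the iteration to something an $\NP$ machine can shortcut by guessing the (polynomially many distinct) reachable bounded-states and their order. Making ``the only thing that propagates is bounded'' precise — extracting, from a memoryless $\ACzero$ enumerator of $R_L$, a constant-width automaton-like object and then invoking the leaf-language characterization of $\PSPACE$ to derive $L\in\NP$ — is the hard step, and it is the one I would spend the most care on; the rest is bookkeeping about $\ACzero$-computability of address arithmetic and DFA simulation.
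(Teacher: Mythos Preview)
Your construction has a fatal design flaw, and separately your lower-bound strategy is far more complicated than necessary.

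\textbf{The upper bound collapses.} You define $\sol{R_L}{x}$ to consist of pairs $\langle u,q\rangle$ where $q$ is the automaton state after processing leaves up through $u$. But then a \emph{memoryless} $\ACzero$ circuit already suffices: given the previous solution $\langle u,q\rangle$, compute $u'=u+1$, read the leaf symbol at $u'$, apply the DFA transition to get $q'$, and output $\langle u',q'\rangle$. No auxiliary memory is needed because $q$ is already part of the solution. Hence your $R_L$ lies in $\DelACzero\subseteq\DelACzeroPrecomp{\ptime}$ and no separation is obtained. The whole point of the constant memory is to carry information that is \emph{not visible in the solutions themselves}.

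The paper fixes this by making the bulk of the solutions completely independent of the serializable computation: $\sol{R_L}{x}$ consists of all strings $0y$ with $|y|=n^l$, together with a single extra ``verdict'' string, namely $1^{n^l+1}$ if $x\in L$ and $10^{n^l}$ if $x\notin L$. The $\ACzero$ enumerator with constant memory outputs the $0y_i$ in lexicographic order while passing the $k$-bit value $c_i$ from the Hertrampf et al.\ $\ACzero$-serializability of $\PSPACE$ as auxiliary memory; after the last $0y_i$ it inspects the memory bits to decide which of the two verdict strings to output. Crucially, $c_i$ lives in the memory, not in the solution.

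\textbf{The lower bound is one line, not a checkpoint argument.} With $R_L$ designed as above, the $\NP$ argument is immediate and requires no chain verification. Suppose $\enum{R_L}\in\DelACzeroPrecomp{\ptime}$ via precomputation $M$ and memoryless circuit family $\calC$. Then $x\in L$ iff $1^{n^l+1}\in\sol{R_L}{x}$, which holds iff $1^{n^l+1}$ is either the first solution $C_{|\cdot|}(x^*)$ or the successor $C_{|\cdot|}(x^*,z)$ of some earlier solution $z$. Every solution other than the verdict string is of the form $0w$ with $|w|=n^l$, and every such $0w$ is a solution regardless of $x$. So an $\NP$ machine computes $x^*=M(x)$, guesses $w\in\{0,1\}^{n^l}$, and accepts iff $C_{|\cdot|}(x^*)=1^{n^l+1}$ or $C_{|\cdot|}(x^*,0w)=1^{n^l+1}$. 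This puts the arbitrary $\PSPACE$ language $L$ into $\NP$.

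Your attempt to ``shortcut the exponential iteration via checkpoints'' is both unnecessary and unlikely to work as stated: the memoryless step map acts on polynomial-size strings, not constant-size ones, so there is no constant-width object to extract. The trick is entirely in the design of $R_L$: make almost all solutions trivially recognizable and $x$-independent, so that the unique $x$-dependent solution has a guessable, verifiable predecessor.
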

 
\begin{appendixproof} 
{\bfseries (of Theorem~\ref{thm:condsep})}
  In 1993 Hertrampf \textit{et al.} \cite{helascvowa93} proved that $\PSPACE$ is $\ACzero$-serialisable, i.\,e., every $\PSPACE$ decision algorithm can be divided into an exponential number of slices, each requiring only the power of $\ACzero$ and passing only a constant number of bits to the next slice.
More formally,  for every language $L\in\PSPACE$ there is an $\ACzero$-circuit family $\calC=(C_n)_{n\in\N}$, numbers $k,l\in\N$ such that every $C_n$ has $k$ output bits and for every input $x$, $\vert x\vert=n$, $N=n+n^l+k$, $x\in L$ if and only if $c_{2^{n^l}}=\underbrace{1\ldots 1}_k$ where
  \begin{itemize}
      \item $y_i$ is the $i$-th string in $\{0,1\}^{n^l}$ in lexicographic order,
      \item $c_1=C_N(x, y_1,\underbrace{0\ldots 0)}_k$, and 
      \item $c_i=C_N(x,y_i, c_{i-1})$ for $1<i\le 2^{n^l}$.
  \end{itemize}
  
  Depending on $L$ we now define the relation 
  $$R_L(x)=\bigl\{\,0y\in\{0,1\}^* \,\big|\, \vert y\vert =n^l\,\bigr\}\cup \{\,1^{n^{l}+1}\mid x\in L\,\}\cup \{\,10^{n^{l}}\mid x\not\in L\,\}.$$
  To enumerate $R_L$ with constant auxiliary memory, we construct circuit families $\calC'=(C'_n)_{n\in\N}$ and $\calD'=(D'_n)_{n\in\N}$ as follows:
  \begin{itemize}
      \item $C'_{|\cdot|}(x) = (y_0,c)$ where $c=C_N(x,y_0,0\dots0)$
      \item $C'_{|\cdot|}(x,(y_i,c))=(y_{i+1},c')$ for $0\leq i<2^{n^l}$ and $c'=C_N(x,y,c)$,
      \item $C'_{|\cdot|}(x,(y_{2^{n^l}},c))=(1^{n^{l}+1},c)$
      \item $C'_{|\cdot|}(x,(1^{n^{l}+1},c))=(1^{n^{l}+1},c)$
      \item $D'_{|\cdot|}(y_i,c)=0y_i$ for $1\leq i\leq2^{n^l}$
      \item $D'_{|\cdot|}(1^{n^{l}+1},c)= 1^{n^{l}+1}$ if $C_N(x,y_{2^{n^l}},c)= \underbrace{1\ldots 1}_k$, $10^{n^{l}}$ else.
  \end{itemize}
      
      Thus we see that $R_L$ can be enumerated with $k$ bits of auxiliary memory, i.e, 
      $R_L\in\DelACzeroConstantSpace$.
      

Now suppose  $\enum R_L\in\DelACzeroPrecomp{\ptime}$ via $\ACzero$-circuit family $\calC=(C_n)_{n\in\N}$. Then $L\in \NP$ by the following algorithm.
\begin{itemize}
 \item Given $x$, use precomputation to produce $x^*$.
 \item Check if $C_{|\cdot|}(x^*)=1^{n^{k}+1}$. If yes, accept.
 \item Guess some output $y\in R_L(x)$.
 \item Check that $C_{|\cdot|}(x^*,y)=1^{n^{k}+1}$.
 \item If yes, then accept.
\end{itemize}
\end{appendixproof}

%
%
%

\begin{figure}[!ht]
  \centering
\begin{tikzpicture}[scale=0.5]
\node[draw](a) at (0,0){$\DelACzero$};
\node[draw](b) at (0,3) {$\DelACzeroConstantSpace$};
\node[draw](c) at (0,6) {$\DelACzeroPolySpace$};
\node[draw](d) at (0,9) {$\DelACzeroSpace$};
\node[draw](e) at (5,2) {$\DelACzeroPrecomp{\mathrm{P}}$};
\node[draw](f) at (5,5) {$\DelACzeroPrecompConstantSpace{\mathrm{P}}$};
\node[draw](g) at (5,8) {$\DelACzeroPrecompPolySpace{\mathrm{P}}$};
\node[draw](h) at (5,11) {$\DelACzeroPrecompSpace{\mathrm{P}}$};
%
\node[draw](j) at (10,10) {$\DelayP$};
\node[draw](k) at (10,13) {$\IncP$};

\draw[line width=1pt](a) to (b);
\draw[line width=1pt](b) to (c);
\draw[dashed](c) to (d);
\draw[line width=1pt](e) to  node[midway,right]{(if $\NP\ne\PSPACE$)}(f);
\draw[dashed](f) to (g);
\draw[dashed](g) to (h);
%
\draw[line width=1pt](a) to (e);
\draw[line width=1pt](b) to (f);
\draw[line width=1pt](c) to (g);
\draw[line width=1pt](d) to (h);
%
\draw[dashed](j) to (k);
\draw[dashed](j) to (g);
\draw[dashed](h) to (k);

\end{tikzpicture}
\caption{Diagram of the classes. Bold lines denote strict inclusions.}\label{graphe1}
\end{figure}
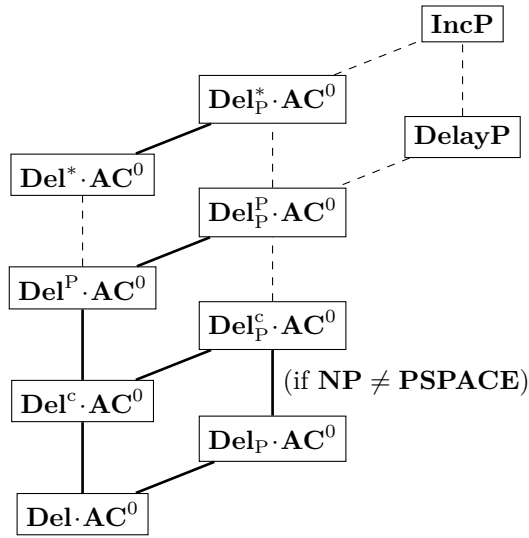

\section{Conclusion}

The obtained inclusion relations among the classes we introduced are summarized in Fig.~\ref{graphe1}. We noted earlier that in our context, enumeration problems are defined without a complexity assumption concerning the underlying relation. We should remark that quite often, a polynomial-time upper bound is required, see \cite{Str19,CapelliS19}. All of our results, with the exception of the conditional separations in Sect.~\ref{separations} also hold under the stricter definition; however, the relation $R_L$ used in the lower bounds in Subsect.~\ref{separations:conditional} is based on a $\PSPACE$-complete set and therefore, to check if $y\in R_L(x)$ requires polynomial space w.r.t.~the length of $x$. It would be nice to be able to base these separations on polynomial-time checkable relations, or even better, to separate the classes unconditionally, but this remains open. Moreover, some further inclusions in Fig.~\ref{graphe1} are still not known to be strict.

In Subsect.~\ref{sat}, we proved that, for several fragments of propositional logic, among them the Krom and the affine fragments, the enumeration of satisfiable assignments is in the class $\DelACzeroPrecomp{\ptime}$. This means satisfiable assignments can be enumerated very efficiently, i.\,e., by an $\ACzero$-circuit family, after some precomputation, which is also efficiently doable (in polynomial time). 
For another important and very natural fragment of propositional logic, namely the Horn fragment, a $\DelayP$-algorithm is known, but it is not at all clear how polynomial-time precomputation can be of any help to produce more than one solution.
Since {\sc Horn-Sat} is $\ptime$-complete, we conclude that $\EHornsat\not\in\DelACzeroSpace$, and we conjecture that it is not in $\DelACzeroPrecomp\ptime$. In fact, we do not see any reasonable better bound than the known $\DelayP$. 

\newpage

\bibliography{biblio}

\end{document}